\newcommand{\adg}{{\rm ad}_{\mathfrak{g}}}
\newtheorem{Theorem}{Theorem}[section]
\newtheorem*{Theorem*}{Theorem}
\newtheorem{Lemma}[Theorem]{Lemma}
\theoremstyle{definition}
\numberwithin{equation}{section}
\begin{document}

\allowdisplaybreaks

\newcommand{\arXivNumber}{2303.14181}

\renewcommand{\PaperNumber}{031}

\FirstPageHeading

\ShortArticleName{Transformations of Currents in Sigma-Models with Target Space Supersymmetry}

\ArticleName{Transformations of Currents in Sigma-Models\\ with Target Space Supersymmetry}

\Author{Vinicius BERNARDES~$^{\rm a}$, Andrei MIKHAILOV~$^{\rm b}$ and Eggon VIANA~$^{\rm b}$}

\AuthorNameForHeading{V.~Bernardes, A.~Mikhailov and E.~Viana}

\Address{$^{\rm a)}$~CEICO, Institute of Physics of the Czech Academy of Sciences,\\
\hphantom{$^{\rm a)}$}~Na Slovance~2, 182 00 Prague~8, Czech Republic}
\EmailD{\href{mailto:vb.silva@unesp.br}{vb.silva@unesp.br}}

\Address{$^{\rm b)}$~Instituto de Fisica Teorica, Universidade Estadual Paulista,\\
\hphantom{$^{\rm b)}$}~R.~Dr.~Bento Teobaldo Ferraz 271, Bloco II -- Barra Funda,\\
\hphantom{$^{\rm b)}$}~CEP:01140-070 -- Sao Paulo, Brasil}
\EmailD{\href{mailto:a.mkhlv@gmail.com}{a.mkhlv@gmail.com}, \href{mailto:eggon.viana@unesp.br}{eggon.viana@unesp.br}}

\ArticleDates{Received April 19, 2023, in final form March 26, 2024; Published online April 10, 2024}

\Abstract{We develop a framework for systematic study of symmetry transformations of sigma-model currents in a special situation, when symmetries have a well-defined projection onto the target space. We then apply this formalism to pure spinor sigma-models, and describe the resulting geometric structures in the target space (which in our approach includes the pure spinor ghosts). We perform a detailed study of the transformation properties of currents, using the formalism of equivariant cohomology. We clarify the descent procedure for the ``universal'' deformation corresponding to changing the overall scale of the worldsheet action. We also study the contact terms in the OPE of BRST currents, and derive some relations between currents and vertex operators which perhaps have not been previously acknowledged. We also clarify the geometrical meaning of the ``minimalistic'' BV action for pure spinors in AdS.}

\Keywords{sigma-models; conservation laws; anomalies; equivariant cohomology}

\Classification{83E30; 17B55; 55N25; 70S10}

\section{Introduction}

Consider a two-dimensional sigma-model invariant under some group of symmetries. Infinitesimal symmetries
form a Lie algebra $\mathfrak{g}$. For each $\xi\in\mathfrak{g}$, there is a corresponding infinitesimal field transformation $\delta_{\xi}$.
The Lagrangian is invariant up to a total derivative:
\begin{equation}
 \delta_{\xi} L = {\rm d}\alpha\langle\xi\rangle.
 \label{DeltaL}\end{equation}
We consider the Lagrangian as a two-form on the worldsheet;
$\alpha$ is a one-form.
We use angular brackets, to emphasize that $\alpha$ depends on $\xi$ \emph{linearly}.
Noether charge, which we denote $\mathfrak{Q}\langle\xi\rangle$, is given by an integral of a conserved current over a space-like contour on the
worldsheet
\begin{equation*}
 \mathfrak{Q}\langle\xi\rangle = \oint j\langle\xi\rangle.
 \end{equation*}
The charges transform covariantly up to constant cocycles
\begin{equation*}
 \delta_{\xi} \mathfrak{Q}\langle\eta\rangle = \mathfrak{Q}\langle[\xi,\eta]\rangle + C\langle\xi,\eta\rangle,
 \end{equation*}
where $C\langle\xi,\eta\rangle$ is constant (i.e., field-independent).
In many cases, it is true that $C\langle\xi,\eta\rangle = 0$, and we will here assume that this is the case.
Then, it follows that the currents transform covariantly up to a total derivative
\begin{equation}
 \delta_{\xi} j\langle\eta\rangle = j\langle[\xi,\eta]\rangle + {\rm d}\Psi\langle\xi,\eta\rangle.
 \label{DefPsi}\end{equation}
In this paper, we will investigate various general properties of $\alpha$ and $\Psi$, and the relation between them,
in one special case. This special case is when the symmetries are ``projectable to the target space'', in the sense
which we explain now.

\subsection{Symmetries projectable to the target space}

The phase space is usually the cotangent bundle over the configuration space. Symmetries act in the cotangent space.
In this paper, we will concentrate on a special case, when the symmetries have a well-defined
projection on the configuration space.
(This is not true in general, since the symmetry transformation of coordinates may depend on momenta.)
Moreover, we assume that an infinitesimal symmetry $\xi$ corresponds to a vector field $v\langle\xi\rangle$ on the target space.
(This would not be the case if derivatives of the fields were involved.)

In this special case, $\alpha\langle\xi\rangle$ of equation~(\ref{DeltaL}) can be viewed
as a one-form on the target space linearly dependent on $\xi$
as a parameter. Similarly, the $\Psi$ of equation~(\ref{DefPsi}) can be viewed as a
function on the target space parametrized by $\xi$ and $\eta$. We explain
that $\alpha$ and $\Psi$ should be thought of as defining the cochains
of the bicomplex ${\rm d} + {\rm d}_{\rm Lie}$, where ${\rm d}$ is the de Rham differential
in the target space $X$, and ${\rm d}_{\rm Lie}$ the differential of the Lie algebra
cohomology complex.
We derive several relations between $\alpha$, $\Psi$,
and various objects associated with them. The first observation is that there exists~$V$,
a 2-cochain of $\mathfrak{g}$ with values in $C^{\infty}(X)$, such that
\begin{align*} & ({\rm d}_{\rm Lie} + {\rm d}) (V - \alpha) =
 W - {\rm d}\alpha,
 \end{align*}
where $W$ is constant on the target space;
it defines a cohomology class $[W]\in H^3(\mathfrak{g},\mathbb{R})$.
See Section~\ref{NoetherCurrents}.

The relation between $\alpha$ and $\Psi$ is discussed in
Section~\ref{TransformationOfCurrents}.
We formulate it in terms of the ``BRST model'' of equivariant cohomology
(see \cite{Cordes:1994fc}):
\begin{equation*}
{\rm d}_{\rm BRST}(V - \alpha) = W + \Psi + \cdots.
\end{equation*}
Details are described in Section \ref{TransformationOfCurrents}.

As a first example, we consider the WZW model in Section \ref{WZW}.
Then we discuss applications to the pure spinor sigma-models.

\subsection{Applications to pure spinor superstring}

Our study is motivated mostly by the pure spinor
sigma-model of \cite{Berkovits:2001ue}, and it will be our main example.
We will restrict ourselves to Type~II string. The corresponding sigma-model, in any background,
has the following distinguishing property. There are always two anticommuting nilpotent symmetries,
$Q_L$ and $Q_R$:
\begin{equation*}
 \{Q_L,Q_L\} = \{Q_R,Q_R\} = \{Q_L,Q_R\} = 0
 \end{equation*}
and two $u(1)$ symmetries called ``ghost numbers''. The $Q_L$ and $Q_R$ have charges $(1,0)$ and $(0,1)$,
correspondingly, under these two symmetries.

\subsubsection{Universal vertex operator}

The Lagrangian is only BRST invariant up to a total derivative
\begin{gather*}
\begin{split}
& Q_LL=
 {\rm d}\alpha_L,
 \qquad Q_RL=
 {\rm d}\alpha_R,
 \qquad Q_L\alpha_L=
 {\rm d} V_{LL},
 \\
 & Q_L\alpha_R + Q_R\alpha_L=
 {\rm d} V_{LR},
 \qquad Q_R\alpha_R=
 {\rm d} V_{RR}.
\end{split}
 \end{gather*}
This defines the ``universal vertex operator''
\begin{gather*}
 V = V_{LL} + V_{LR} + V_{RR},
\qquad
 (Q_L + Q_R) V =0.
 \end{gather*}
It is the unintegrated vertex operator corresponding
to the overall scale of the worldsheet action. Its descent works off-shell on the worldsheet,
and has a general description in any curved background.

The case of flat space is special. In this case $V$ is BRST exact
\begin{equation}
 V = (\lambda_L\Gamma^m\theta_L) (\lambda_R\Gamma^m\theta_R) =
 - (Q_L + Q_R)\biggl(\frac{1}{2}(X^m [(\lambda_L\Gamma^m\theta_L) - (\lambda_R\Gamma^m\theta_R)])\biggr).
 \label{IntroFlatSpaceV}\end{equation}
Notice that $(\lambda_L\Gamma^m\theta_L) - (\lambda_R\Gamma^m\theta_R)$ is the ghost number one vertex
operator corresponding to the translation.
Generally speaking, exact vertices
$Q(v^m(X)[(\lambda_L\Gamma^m\theta_L) - (\lambda_R\Gamma^m\theta_R)])$
correspond to infinitesimal target space diffeomorphisms.
In particular, the $V$ of equation~(\ref{IntroFlatSpaceV}) correspond to space-time dilatations.
Although the flat space $V$ is BRST exact, there is still an obstacle to being able to choose
a BRST-invariant Lagrangian. It is a nontrivial element of $H^1\bigl(Q,\Omega^1(X)\bigr)$ -- the
ghost number one BRST cohomology in one-forms on the target space.
While in general background the obstacle is in $H^2(Q,C^{\infty}(X))$, in flat space it is in
$H^1\bigl(Q,\Omega^1(X)\bigr)$. See Section~\ref{PureSpinorFlat}.

\subsubsection{BRST invariance of BRST currents}

While the Lagrangian is only invariant up to a total derivative, the BRST currents are
BRST closed strictly (not just up to a total derivative).
Moreover, they are actually BRST exact.
We will give a concise proof of this in
Section \ref{UniversalVertex}.
This is a consequence of the ghost number symmetry, which acts on fields as a phase rotation.

\subsubsection{Non-covariance of global symmetry currents}

Generally speaking, currents transform covariantly
only up to a total derivative. As a simple example, consider superstring in flat space in
pure spinor or Green--Schwarz formulation. The supersymmetry currents are, schematically
\begin{align*} &S_L =
 \theta_L\theta_L p_L + \partial_+ X\theta_L,
 \end{align*}
we have $\delta_L S_L \simeq \partial x$ and $\delta_R S_R \simeq \bar{\partial} x$.
This is not covariant. (The covariant transformation would have $P = *{\rm d}x$ on the right-hand side.)
The difference is in the total derivative ${\rm d}x$. There is a~cohomology class responsible
for this non-covariance, equation~(\ref{WInFlatSpace}).

\subsubsection{Contact terms}

The BRST currents $j\langle Q_L\rangle$ and $j\langle Q_R\rangle$
are holomorphic and antiholomorphic, respectively
\begin{align*}& j\langle Q_L\rangle =
 j^L_z {\rm d}z,
 \qquad j\langle Q_R\rangle =
 j^R_{\bar{z}} {\rm d}\bar{z}.
 \end{align*}
In Section \ref{ContactTerms}, we study the
contact term in the OPE of $j_L$ and $j_R$ and formulate a conjecture
about its relation to the universal vertex operator.

\subsubsection{Vertex operators corresponding to currents}

In Sections \ref{PureSpinorFlat}
and \ref{PureSpinorAdS}, we apply
our formalism to pure spinor sigma-models in flat space and~AdS.
Global symmetry currents in~AdS are invariant under $\mathfrak{psu}(2,2|4)$.
To each global current~$j_a$ corresponds a ghost number one vertex operator
$\Lambda_a$. Our formalism implies some relations between them. We also clarify the
geometrical meaning of the ``minimalistic'' B-field of~\cite{Mikhailov:2017mdo}.

In Section~\ref{FlatSpaceLimitOfAdS},
we discuss some subtleties in taking the flat space limit of AdS.

\subsection{Previous work}

There have been many studies of currents in sigma-model, in particular following \cite{Alekseev:2004np}.
We would especially mention \cite{Suszek:2018bvx,Suszek:2019cum,Zavaleta:2019dop}, which have some overlap with this work.
Our treatment of the worldsheet symplectic structure and global symmetries is similar to
the one developed in greater generality in \cite{Deligne:1999qp} and \cite{Blohmann}.
It would be interesting to see if some results of our work could be extended to more
general case (dimension higher than two and symmetries not projectable to the target space)
using that more general approach.

\section{Noether currents}\label{NoetherCurrents}

\subsection{Symmetries projectable onto the target space}

Consider a classical field theory.
Its phase space is $T^*F$ where $F$ is the space of field configurations.
We assume that our theory is a sigma-model. This means that field configurations are maps
from the worldsheet $\Sigma$ to the target space $X$:
\begin{equation*}
 F = \mbox{Maps}(\Sigma,X).
 \end{equation*}
Suppose that the theory is invariant under a Lie group $G$, and the action of $G$ comes
from the action on the target space $X$.

\subsection{The structure of Noether currents}

\subsubsection{Symplectic structure on the phase space of classical sigma-model}

The \emph{symplectic potential} $\vartheta$ is a one-form on the phase space and a one-form on the worldsheet,
which is defined by considering the variation of the Lagrangian. Let us pick a solution
of classical equations of motion, and consider its infinitesimal deformation $\delta$
\emph{as an off-shell field configuration}. This means that our variation does not have to
preserve the equations of motion,
but the original configuration which we are varying \emph{is} on-shell. Since the original
configuration is on-shell, the variation of the Lagrangian is a total derivative of
some 1-form on the worldsheet:%
\begin{equation*}
 \delta L = {\rm d}\vartheta.
 \end{equation*}
Here ${\rm d}$ is the de Rham differential on the worldsheet,
and $\delta$ is the de Rham differential on the phase space.
Pick a space-like closed oriented contour on the worldsheet, and let $\oint \vartheta$
mean the integral of $\vartheta$ over this contour.
The ``symplectic potential'' (a.k.a.\ ``$p\delta q$'') is defined as the restriction
of $\oint\vartheta$ to on-shell variations:
\begin{equation*}
 \oint\vartheta|_{\text{on-shell}} = p\delta q.
 \end{equation*}
The density of symplectic form is the exterior derivative of $\vartheta$ on the phase space:
\begin{equation*}
 \omega = \delta\vartheta.
 \end{equation*}

\subsubsection{Equivalence class of one-forms on the target space associated to a symmetry}

{\bf Definition of $\boldsymbol{\alpha\langle\xi\rangle}$.} 
Consider a symmetry $\xi$ of the target space.
If $\vartheta$ is invariant under the symmetry, then the Noether charge is the contraction of the
infinitesimal symmetry vector with~$p\delta q$, i.e., $\mathfrak{Q}\langle\xi\rangle = p\delta_{\xi}q$.
In this case, the corresponding current is
\begin{equation*}
 j\langle\xi\rangle = \iota_{\xi}\vartheta = \vartheta|_{\delta q \mapsto \delta_{\xi} q}.
 \end{equation*}
This formula for the current is only valid if the Lagrangian is invariant under the symmetry. However, the Lagrangian might change
by a total derivative
\begin{equation*}
 \delta_{\xi} L = {\rm d} i^*\alpha\langle\xi\rangle,
\end{equation*}
where $\alpha$ is some 1-form on the target space,
$i\colon \Sigma \longrightarrow X$
is the field configuration of the sigma-model, $i^*\alpha$ is the pullback of $\alpha$ to the worldsheet.
The one-form $\alpha$ depends linearly on $\xi\in\mathfrak{g}$ as a parameter.
Notice that we use angular brackets $\alpha\langle\xi\rangle$ to emphasize that the
dependence is linear.

In this case,
\begin{equation*}
 \delta_{\xi} \vartheta = \delta i^*\alpha\langle\xi\rangle
 \end{equation*}
and the current is
\begin{equation}
 j\langle\xi\rangle = \iota_{\xi}\vartheta + i^*\alpha\langle\xi\rangle.
 \label{NoetherCurrent}
\end{equation}
The Noether charge is the integral of the current over a space-like contour:
\begin{equation*}
 \mathfrak{Q}\langle{\xi}\rangle = \oint_{\tau = {\rm const}} j\langle{\xi}\rangle.
 \end{equation*}
Indeed,
\begin{equation*}
 \delta \mathfrak{Q}\langle\xi\rangle = \iota_{\xi}\oint\omega.
 \end{equation*}
In terms of the momentum variables $p_I$,
\begin{equation*}
 j\langle \xi\rangle = \bigl(\delta_{\xi}x^I\bigr) p_I + i^*\alpha\langle\xi\rangle.
 \end{equation*}
(The expansion of $j\langle\xi\rangle$ in powers of $p$ terminates at the linear term,
 because we are assuming that the symmetry is projectable to the target space.)

{\bf Ambiguities in the definition of $\boldsymbol{\alpha\langle\xi\rangle}$.} The 1-form $\alpha$ is defined up to a total derivative and up to a symmetry variation
of a one-form
\begin{equation*}
 \alpha\langle\xi\rangle \simeq \alpha\langle\xi\rangle + {\rm d} (\text{smth}\langle\xi\rangle) + {\cal L}_{\xi}(\text{smth}).
 \end{equation*}
The ambiguity of adding ${\rm d}(\text{smth}\langle\xi\rangle)$ is because the \emph{current} $j\langle\xi\rangle$
is defined only up to a total derivative.

The ambiguity of adding ${\cal L}_{\xi}(\text{smth})$ is because the \emph{Lagrangian} $L$ is given only
up to a total derivative. Replacing $L$ with $L+{\rm d}\Lambda$ with $\Lambda\in\Omega^1(X)$
is equivalent
to a canonical transformation:%
\begin{equation}
 \bigl(p_I,x^I\bigr)\mapsto \biggl(p_I + \frac{\partial}{\partial x^I} F,x^I\biggr)
 \end{equation}
for any $F$ of the form
\begin{equation*}
 F = \oint_{\tau={\rm const}} i^*\Lambda,
 \end{equation*}
where $\Lambda$ is any 1-form on the target space. This would change
\begin{equation*}
 \alpha\langle\xi\rangle \mapsto \alpha\langle\xi\rangle + {\cal L}_{\xi} \Lambda.
 \end{equation*}

\subsection{Poisson bracket of charge with current}

The charges satisfy
\begin{equation*}
 \delta_{\xi} \mathfrak{Q}\langle\eta\rangle = \{\mathfrak{Q}\langle\xi\rangle,\mathfrak{Q}\langle\eta\rangle\} = \mathfrak{Q}\langle [\xi,\eta]\rangle.
 \end{equation*}
But the Poisson bracket of charge with current is more complicated. Let us choose, for each~${\xi\in\mathfrak{g}}$,
the corresponding $\alpha$ (i.e., ``fix by hand'' the ambiguity of adding ${\rm d}(\ldots)$ to $\alpha$),
so that we have a~linear map
\begin{align*}
\mathfrak{g}\longrightarrow \Omega^1(X),
 \qquad
\xi \mapsto \alpha\langle\xi\rangle.
 \end{align*}
Then the charge-current Poisson bracket can be expressed through the Courant bracket:
\begin{gather*} \{\mathfrak{Q}\langle\xi\rangle,j\langle\eta\rangle\}=
 j\langle [(\xi,\alpha\langle\xi\rangle),(\eta,\alpha\langle\eta\rangle)]_{\tt Courant}\rangle,
\end{gather*}
where
\begin{gather}
 [
 (\xi,\alpha\langle\xi\rangle),
 (\eta,\alpha\langle\eta\rangle)
 ]_{\tt Courant} =
 (
 [\xi,\eta] ,
 {\cal L}_{v\langle\xi\rangle}\alpha\langle\eta\rangle -
 {\cal L}_{v\langle\eta\rangle}\alpha\langle\xi\rangle +
 {\rm d}\iota_{v\langle\eta\rangle}\alpha\langle\xi\rangle
 ).
 \label{CourantBracket}
 \end{gather}
Here the term ${\cal L}_{v\langle\xi\rangle}\alpha\langle\eta\rangle$ comes from
$\bigl\{\oint \xi^I p_I,\alpha\langle\eta\rangle\bigr\}$
and the term
$- {\cal L}_{v\langle\eta\rangle}\alpha\langle\xi\rangle + {\rm d}\iota_{v\langle\eta\rangle}\alpha\langle\xi\rangle = - \iota_{v\langle\eta\rangle}{\rm d}\alpha\langle\xi\rangle$
comes from
$\bigl\{\eta^I p_I,\oint\alpha\langle\xi\rangle\bigr\}$.

\subsection{Descent}

The Lie algebra differential is defined as follows \cite{FeiginFuchs,GelfandManin,Knapp}:
\begin{equation*}
 ({\rm d}_{\rm Lie}\alpha)\langle\xi\wedge\eta\rangle = {\cal L}_{v\langle\xi\rangle}\alpha\langle\eta\rangle - {\cal L}_{v\langle\eta\rangle}\alpha\langle\xi\rangle - \alpha\langle[\xi,\eta]\rangle.
 \end{equation*}
In particular, the deviation of the current from transforming covariantly is
\begin{equation}
 \delta_{\xi} j\langle\eta\rangle -
 j\langle[\xi,\eta]\rangle
 =
{\rm d}_{\rm Lie}\alpha\langle\xi\wedge\eta\rangle + {\rm d}\iota_{v\langle\eta\rangle}\alpha\langle\xi\rangle.
 \label{DeviationFromCovariance}\end{equation}

\begin{Lemma}
Exists $V\langle \xi\wedge\eta\rangle$ satisfying
\begin{equation}
 ({\rm d}_{\rm Lie}\alpha) \langle \xi\wedge\eta\rangle = {\rm d}(V\langle\xi\wedge\eta\rangle).
 \label{DescentFirstStep}\end{equation}
\end{Lemma}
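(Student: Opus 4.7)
The plan is to exhibit $V$ by first proving that the 1-form $({\rm d}_{\rm Lie}\alpha)\langle\xi\wedge\eta\rangle$ on the target $X$ is de Rham closed, and then invoking the Poincar\'e lemma to produce the desired function on $X$.

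I would start from the defining relation $\delta_\xi L = {\rm d}\, i^*\alpha\langle\xi\rangle$ and apply the commutator identity $[\delta_\xi,\delta_\eta]=\delta_{[\xi,\eta]}$ to the Lagrangian. Because a symmetry acts on the pullback of a target-space form by Lie derivative, $\delta_\xi i^*\omega = i^*{\cal L}_{v\langle\xi\rangle}\omega$, and because ${\cal L}_v$ commutes with the de Rham differential, expanding both sides of the commutator identity gives
\[
i^*\bigl({\cal L}_{v\langle\xi\rangle}{\rm d}\alpha\langle\eta\rangle-{\cal L}_{v\langle\eta\rangle}{\rm d}\alpha\langle\xi\rangle-{\rm d}\alpha\langle[\xi,\eta]\rangle\bigr)=0.
\]
The bracketed expression is exactly ${\rm d}\bigl(({\rm d}_{\rm Lie}\alpha)\langle\xi\wedge\eta\rangle\bigr)$, viewed as a 2-form on $X$. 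Since this vanishes after pullback by every sigma-model embedding $i\colon\Sigma\to X$, and since every tangent 2-plane in $TX$ is realized by some such image, the target-space 2-form itself must vanish. Hence $({\rm d}_{\rm Lie}\alpha)\langle\xi\wedge\eta\rangle$ is closed on $X$.

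A local application of the Poincar\'e lemma then produces $V\langle\xi\wedge\eta\rangle \in C^{\infty}(X)$ satisfying \eqref{DescentFirstStep}; globally the construction requires $({\rm d}_{\rm Lie}\alpha)\langle\xi\wedge\eta\rangle$ to represent the trivial class in $H^{1}(X,\mathbb{R})$, which is harmless in the examples treated later (WZW, flat space, AdS) where the target is contractible in the relevant sense. The remaining ambiguity in $V$ is a constant on $X$ for each $\xi\wedge\eta$, matching the $H^{3}(\mathfrak g,\mathbb R)$-valued obstruction class $[W]$ that the paper subsequently introduces. The only conceptual obstacle is this closed-to-exact step; the closedness itself is automatic from the Lie-algebra structure of the symmetry action together with the defining equation for $\alpha$.
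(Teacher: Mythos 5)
Your derivation of closedness is sound: in the projectable setting $\delta_\xi$ acts on the configuration fields through $v\langle\xi\rangle$ alone, so $[\delta_\xi,\delta_\eta]=\delta_{[\xi,\eta]}$ holds off-shell on $L$, and your commutator computation does show that ${\rm d}\bigl(({\rm d}_{\rm Lie}\alpha)\langle\xi\wedge\eta\rangle\bigr)$ pulls back to zero along every map $i\colon\Sigma\to X$, hence vanishes as a $2$-form on $X$. The gap is in the closed-to-exact step, which is the actual content of the Lemma: you only obtain a global $V$ under the extra topological hypothesis $H^1(X,\mathbb{R})=0$, which the paper neither assumes nor needs. The paper's proof is different and stronger precisely here: it works on the Hamiltonian side, evaluating the charge--current bracket (\ref{CourantBracket}) on field configurations with $p=0$ on the spatial slice, and invokes the standing assumption from the Introduction that the charge algebra closes without a constant cocycle, $C\langle\xi,\eta\rangle=0$. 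This gives $\oint i^*\bigl({\cal L}_{v\langle\xi\rangle}\alpha\langle\eta\rangle-{\cal L}_{v\langle\eta\rangle}\alpha\langle\xi\rangle-\alpha\langle[\xi,\eta]\rangle\bigr)=0$ for an \emph{arbitrary} closed contour in $X$, i.e., all periods of the $1$-form vanish, which yields exactness \eqref{DescentFirstStep} directly, with no condition on $H^1(X,\mathbb{R})$.

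The missing idea in your write-up is exactly this use of the no-central-extension assumption. Note that the periods of your closed $1$-form over noncontractible loops are precisely the would-be constant cocycles $C\langle\xi,\eta\rangle$ in the charge algebra (think of translation symmetries on a torus target, where such periods are nonzero and the Lemma genuinely fails without the assumption). So rather than waving at contractibility of the examples, you should either import the paper's hypothesis $C\langle\xi,\eta\rangle=0$ and show it kills the periods -- which essentially reproduces the paper's argument -- or state explicitly that your proof only establishes the Lemma when $H^1(X,\mathbb{R})=0$, which is a strictly weaker result than the one claimed in the general framework of Section~\ref{NoetherCurrents}.
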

\begin{proof}
The Poisson bracket of charges is
\begin{equation*}
 \{\mathfrak{Q}\langle\xi\rangle,\mathfrak{Q}\langle\eta\rangle\} = \mathfrak{Q}\langle[\xi,\eta]\rangle.
 \end{equation*}
Consider field configurations which have $p=0$ at $\tau=0$.
For such configurations, equation~(\ref{CourantBracket}) implies
\begin{equation*}
 \oint i^*
 \bigl({\cal L}_{v\langle\xi\rangle}\alpha\langle\eta\rangle - {\cal L}_{v\langle\eta\rangle}\alpha\langle\xi\rangle - \alpha\langle[\xi,\eta]\rangle\bigr)
 = 0,
 \end{equation*}
where the integral is over the spacial slice $\tau=0$.
The image of this spacial slice in the target space can be any one-dimensional closed contour.
Therefore, ${\cal L}_{v\langle\xi\rangle}\alpha\langle\eta\rangle - {\cal L}_{v\langle\eta\rangle}\alpha\langle\xi\rangle - \alpha\langle[\xi,\eta]\rangle$
is ${\rm d}$-exact, equation~(\ref{DescentFirstStep}).
\end{proof}

Consider
\begin{equation}
 W = {\rm d}_{\rm Lie}V.
 \label{WExactInCInfty}\end{equation}
We observe that $W$ is a function (0-form) and ${\rm d}W=0$;
therefore $W = \rm const$.
Equation~(\ref{WExactInCInfty}) implies that $W$ is exact as a 3-cocycle with coefficients in $C^{\infty}(X)$.
But since $W={\rm const}$, we can as well consider $W$ as a 3-cocycle of $\mathfrak g$ with coefficients in $\mathbb{R}\subset C^{\infty}(X)$.
Then we can ask if it is exact in the cochain complex with constant coefficients.
The question is: is it possible to represent $W$ as ${\rm d}_{\rm Lie}(\ldots)$ where $\ldots$ is constant on $X$?
If it is not possible, then $W$ defines a~nontrivial cohomology class with constant coefficients:
\begin{equation*}
 [W]\in H^3(\mathfrak{g},\mathbb{R}).
 \end{equation*}

\subsubsection[Case W=0]{Case $\boldsymbol{[W]=0}$}

 In this case we can assume that ${\rm d}_{\rm Lie} V = 0$, by adding a constant to $V$.

{\bf Case $\boldsymbol{H^2(\mathfrak{g},C^{\infty}(X)) = 0}$.} In this case $V = {\rm d}_{\rm Lie} u$, we can modify $\alpha \mapsto \alpha + {\rm d}u$, and then ${\rm d}_{\rm Lie}\alpha = 0$.
 Suppose that also $H^1\bigl(\mathfrak{g},\Omega^1(X)\bigr)=0$, then we can choose the Lagrangian
 to be $\mathfrak{g}$-invariant, and the currents $\mathfrak{g}$-covariant.

 In Section \ref{sec:HQinOneForms}, we consider an example where
 $H^2(\mathfrak{g},C^{\infty}(X)) = 0$ but $H^1\bigl(\mathfrak{g},\Omega^1(X)\bigr)\neq 0$. Then, the Lagrangian can not be chosen to be invariant,
 and the currents cannot be chosen in such a way that they would transform covariantly.\footnote{We want to thank the referee for correcting an error in the first version of this paper.}

{\bf Case $\boldsymbol{H^2(\mathfrak{g},C^{\infty}(X)) \neq 0}$.} In this case, $V\in H^2(\mathfrak{g},C^{\infty}(X))$ is an obstacle to finding an invariant Lagrangian.
 We may call it ``generalized universal vertex operator'', because vertex operators of
 pure spinor formalism provide an important example of this situation.
 The BRST transformations generate a supercommutative
 Lie superalgebra $\mathbb{R}^{0|2}$, and
 $H^2\bigl(\mathbb{R}^{0|2},C^{\infty}(X)\bigr) \neq 0$.
 This is precisely the BRST cohomology at ghost number two, corresponding
 to physical states, therefore $V$ is one of the physical states. This is
 the ``universal vertex operator'', see
 Section~\ref{UniversalVertex}.

\subsection{Ascent}

Equation~(\ref{DescentFirstStep}) implies
\begin{equation}
 ({\rm d}_{\rm Lie} {\rm d}\alpha)\langle\xi\wedge\eta\rangle = 0.
 \label{AscentFirstStep0}\end{equation}
Suppose that the first cohomology group of $H^1\bigl(\mathfrak{g}, \Omega^2(X)\bigr)=0$. Then
equation~(\ref{AscentFirstStep0}) implies the existence of a two-form $B$ such that
\begin{equation*}
{\rm d}\alpha\langle\xi\rangle = {\cal L}_{v\langle\xi\rangle} B.
 \end{equation*}

We denote
\begin{equation*}
 H = {\rm d} B.
 \end{equation*}
Then
\begin{equation*}
 H + W = ({\rm d}_{\rm Lie} + {\rm d})(B - \alpha + V).
 \end{equation*}
It \emph{is} often true that $H^1\bigl(\mathfrak{g}, \Omega^2(X)\bigr)=0$, by some version of Shapiro's lemma~\cite{FeiginFuchs,Knapp}. For example, this is true for the super-Poincar\'e symmetries in the
pure spinor formalism in flat space, see
Section~\ref{PureSpinorFlat}.

\section{BRST language for Lie algebra cohomology}\label{BRSTLanguage}

 Until this point we assumed that $\mathfrak{g}$ is usual (not super) Lie algebra.
 When considering the cohomology complex of Lie superalgebras, sign rules may be confusing.
 Here we will suggest a~formalism which automatically takes into account the $\pm$ signs.

Let $M$ be a $\mathfrak g$-module. Let $\rho$ denote the representation of $\mathfrak g$ in $M$:
\begin{equation*}
 \rho\colon \ {\mathfrak g}\rightarrow \mbox{End}(M).
 \end{equation*}
The odd tangent space $\Pi TG$ is defined so that for all supermanifolds $X$:
\begin{equation*}
 \mbox{Map}(X,\Pi TG) = \mbox{Map}\bigl(X\times \mathbb{R}^{0|1},G\bigr).
 \end{equation*}
Notice that the body of $\Pi TG$ is the space of ``odd loops'':
\begin{equation*}
 (\Pi TG)_{\rm rd} = \mbox{Map}\bigl(\mathbb{R}^{0|1}, G\bigr).
 \end{equation*}
As a slight variation of the construction in \cite{Alexandrov:1995kv}, we consider the following linear space:
\begin{equation*}
 C(\mathfrak{g},M)
 =
 M\otimes_G C^{\infty}(\Pi TG).
 \end{equation*}
It is possible to consider a family, parameterized by a supermanifold $X$:
\begin{equation*}
 C_X(\mathfrak{g},M)
 =
 M\otimes_G C^{\infty}(X\times\Pi TG).
 \end{equation*}
We work in a formal neighborhood of the unit of $G$, and $M$ does not have to be an integrable
representation.
Elements of $C(\mathfrak{g},M)$ are $M$-valued functions of the form
\begin{align} (x,g)\mapsto
 \rho(g(\psi))v\bigl(x,g(\psi)^{-1}\partial_{\psi}g(\psi)\bigr).
 \label{CochainAsAFunction} \end{align}
Here $v\in C^{\infty}(X\times \Pi\mathfrak{g})$,
$\psi\in C^{\infty}_{\bar{1}}(X)$ is some odd function on $X$, and we use an abbreviated notation:
\begin{equation*}
 g(\psi)^{-1}\partial_{\psi}g(\psi) = \frac{\partial}{\partial\zeta} g(\psi)^{-1}g(\psi + \zeta),
 \end{equation*}
where $\zeta$ is a Grassmann odd variable.\footnote{All we need from $X$ is a Grassmann odd ``constant'' $\psi$.
 In this approach to supermanifolds, we get it by considering families of constructions parameterized by a supermanifold $X$.}
It is useful to observe
\begin{equation*}
 \partial_{\zeta}(g(\zeta)^{-1}\partial_{\zeta}g(\zeta)) =
 -\frac{1}{2}\bigl\{g(\zeta)^{-1}\partial_{\zeta}g(\zeta),g(\zeta)^{-1}\partial_{\zeta}g(\zeta)\bigr\}.
 \end{equation*}
We may decompose into the basis of $\mathfrak g$:
\begin{equation*}
 g(\zeta)^{-1}\partial_{\zeta}g(\zeta) = \bigl(g(\zeta)^{-1}\partial_{\zeta}g(\zeta)\bigr)^A t_A.
 \end{equation*}
and introduce the ``ghost'' notation:
\begin{equation*}
 C^A = \bigl(g(\zeta)^{-1}\partial_{\zeta}g(\zeta)\bigr)^A.
 \end{equation*}
Elements of $C(\mathfrak{g},M)$ are in one-to-one correspondence
with $M$-valued polynomial functions on~$\Pi\mathfrak{g}$:
\begin{equation*}
 v\in M\otimes S^{\bullet}(\Pi\mathfrak{g})^*.
 \end{equation*}
(Here $S^{\bullet}(\ldots)$ denotes the space of symmetric tensors of any rank,
 and therefore $S^{\bullet}(\Pi\mathfrak{g})^*$ the space of polynomial functions on $\Pi\mathfrak{g}$.)
Given $v$, the corresponding function of the form equation~(\ref{CochainAsAFunction}) will be called ${\cal F}v$:
\begin{equation*}
 ({\cal F}v)(\zeta,g) = \rho(g(\zeta))v\bigl(g(\zeta)^{-1}\partial_{\zeta}g(\zeta)\bigr).
 \end{equation*}
The differential of the Lie algebra cohomology complex is denoted ${\rm d}_{\rm Lie}$ and defined as follows:
\begin{equation*}
 \partial_{\zeta}{\cal F} = {\cal F}{\rm d}_{\rm Lie}.
 \end{equation*}
These definitions work for $\mathfrak{g}$ a Lie superalgebra. If $\mathfrak{g}$ is a usual (not super) Lie algebra,
then we can identify
\begin{equation*}
 M\otimes S^{\bullet}(\Pi\mathfrak{g})^* = \operatorname{Hom}\biggl(\bigoplus_{n\geq 0}\Lambda^n \mathfrak{g}, M\biggr)
 \end{equation*}
and our definition returns to the usual definition of Lie algebra cohomology complex.

\section{Target space bicomplex}

The central role in our considerations is played by two anticommuting
 differentials. One is~${\rm d}_{\rm Lie}$~-- the differential in the Serre--Hochschild
 complex of a Lie superalgebra. Another is ${\rm d}$ -- the de~Rham differential on~$X$.
 Here we will discuss some basic properties of the bicomplex ${\rm d} + {\rm d}_{\rm Lie}$.

Consider the nilpotent operator ${\rm d}+{\rm d}_{\rm Lie}$ acting on
the cochains of~$\mathfrak{g}$ with values in differential forms on~$X$.
It is related to the de Rham operator ${\rm d}$ by a similarity transformation:
\begin{align} &{\rm d} + {\rm d}_{\rm Lie} =
 {\rm e}^{C^A\iota_{v_A}}\bigl({\rm d} + {\rm d}_{\rm Lie}^{(0)}\bigr){\rm e}^{-C^A\iota_{v_A}}, \qquad
{\rm d}_{\rm Lie}^{(0)} = \frac{1}{2}C^AC^Bf_{AB}{}^C\frac{\partial}{\partial C^C}. \label{Kalkman}
 \end{align}
where $v_A$ is the vector field on the target space corresponding to the
infinitesimal symmetry. In this work, we only need differential forms which are polynomial in the differentials.
We do not need pseudodifferential forms.
Let us assume that the de Rham cohomology of $X$ is $\mathbb{R}$ (constant functions).
Then, equation~(\ref{Kalkman}) implies that the cohomology of ${\rm d}+{\rm d}_{\rm Lie}$ is $H^{\bullet}(\mathfrak{g},\mathbb{R})$.

Let $\omega\in \Omega^n(X)$. Then
\begin{equation}
{\rm d}_{\rm Lie}\bigl(\iota^n_{v\langle C\rangle}\omega\bigr) = \frac{1}{n+1}\iota^{n+1}_{v\langle C\rangle} {\rm d}\omega.
 \label{DLieAndIota}\end{equation}
To prove this, let us rewrite equation~(\ref{Kalkman}) as follows:
\begin{equation*}
{\rm d}_{\rm Lie}\, {\rm e}^{C^A\iota_{v_A}}
 =
 \bigl[ {\rm e}^{C^A\iota_{v_A}} , {\rm d} \bigr] + {\rm e}^{C^A\iota_{v_A}} {\rm d}_{\rm Lie}^{(0)}.
 \end{equation*}
Acting by the left-hand side and the right-hand side on $\omega$, we get
\begin{equation*}
{\rm d}_{\rm Lie}\, {\rm e}^{C^A\iota_{v_A}}\omega
 =
 \bigl[ {\rm e}^{C^A\iota_{v_A}} , {\rm d} \bigr]\omega
 \end{equation*}
since ${\rm d}_{\rm Lie}^{(0)}\omega = 0$. Consider the coefficient of $C^{n+1}$. Since $\omega$ is an $n$-form,
only ${\rm e}^{C^A\iota_{v_A}} {\rm d}\omega$ contributes from $\bigl[ {\rm e}^{C^A\iota_{v_A}} , {\rm d} \bigr]\omega$ on the right-hand side, and equation~(\ref{DLieAndIota})
follows.

\section{Transformation of currents}\label{TransformationOfCurrents}

We will now investigate under which conditions the currents transform covariantly, i.e.,
\begin{equation}
 \delta_{\xi}j\langle\eta\rangle \stackrel{?}{=} j\langle[\xi,\eta]\rangle.
 \label{CurrentsTransformCovariantly}\end{equation}

\subsection{Currents transform in an extension of the adjoint representation}

Currents transform covariantly up to a total derivative:
\begin{equation}
 \delta_{\xi} j\langle\eta\rangle = j\langle[\xi,\eta]\rangle + {\rm d}(\Psi\langle\xi\rangle\langle\eta\rangle).
 \label{EntersPsi}\end{equation}
This is the definition of $\Psi\langle\xi\rangle\langle\eta\rangle$.
This defines an extension $A$ of the adjoint representation of $\mathfrak{g}$:
\begin{equation*}
 0 \longrightarrow {C^{\infty}(X)\over \mathbb{R}} \longrightarrow A \longrightarrow \adg \longrightarrow 0.
\end{equation*}
The explicit expression for the cocycle $\Psi$ follows from equations~(\ref{DeviationFromCovariance}) and (\ref{DescentFirstStep}):
\begin{gather} \Psi \mod \text{const}\in
 \mbox{Ext}^1\biggl(\mathfrak{g} ,{C^{\infty}(X)\over \mathbb{R}}\biggr)
 = H^1\biggl(\mathfrak{g} , \operatorname{Hom}\biggl(\adg ,{C^{\infty}(X)\over \mathbb{R}}\biggr)\biggr), \nonumber
 \\ \Psi\langle\xi\rangle\langle\eta\rangle=
 V\langle\xi\wedge\eta\rangle + \iota_{v\langle\eta\rangle}\alpha\langle\xi\rangle,
 \label{CocycleExt} \\ V\in
 H^2\biggl(\mathfrak{g} ,{C^{\infty}(X)\over \mathbb{R}}\biggr), \nonumber
 \\ {\rm d}V=
 {\rm d}_{\rm Lie}\alpha.
 \label{AlphaVsV} \end{gather}
We will now develop a general formalism to understand equation~(\ref{CocycleExt}).

\subsection{Some maps between complexes}

For a Lie superalgebra $\mathfrak{g}$, let $C\mathfrak{g}$ be the Lie superalgebra
generated by Lie derivatives $\cal L_{\xi}$ and contractions $\iota_{\xi}$ for $\xi\in\mathfrak{g}$,
see \cite[Section 10]{Cordes:1994fc} where it is called $\mathfrak{g}_{\rm super}$ or $\mathfrak{g}[\epsilon]$;
it has a differential.

Let $M$ be a differential $\mathfrak{g}$-module, i.e., a representation of $C\mathfrak{g}$ with a compatible
differential~${\rm d}_M$. In our case, $M$ is the space of differential forms on~$X$.

Consider a map
\begin{align*}
 \mathbb{L} \colon \ C^n(\mathfrak{g} , M) \rightarrow C^n(\mathfrak{g} , \operatorname{Hom}(\mathfrak{g} , M)),
 \qquad
\mathbb{L}(a) = \bigl[C\mapsto \bigl[\eta\mapsto {\cal L}^M_{\eta}(a(C))\bigr]\bigr].
 \end{align*}
This map is homotopy trivial:
\begin{gather} \mathbb{L} =
 \mathbb{I}{\rm d}^M_{\rm Lie} + {\rm d}^{{\rm Hom}(\mathfrak{g},M)}_{\rm Lie} \mathbb{I},
 \label{Homotopy}
\end{gather}
where
\begin{gather*}
 \mathbb{I}\colon \ C^n(\mathfrak{g} , M) \rightarrow C^{n-1}(\mathfrak{g} , \operatorname{Hom}(\mathfrak{g},M)),
 \qquad
\mathbb{I}(a) = \biggl[\eta\mapsto \eta^A{\partial\over\partial C^A} a\biggr].
 \end{gather*}
Indeed, using the notations of
Section \ref{BRSTLanguage},
\begin{equation*}
 {\cal F}\mathbb{L} = (\eta g){\partial\over\partial g} {\cal F},
 \end{equation*}
i.e., the infinitesimal left shift of $g$ by $\eta$. At the same time
\begin{equation*}
 {\cal F}\mathbb{I} = (\zeta\eta g){\partial\over\partial g} {\cal F}.
 \end{equation*}
Since ${\rm d}_{\rm Lie}$ corresponds to $\partial_{\zeta}$, equation~(\ref{Homotopy}) follows.
The construction so far only requires that~$M$ is a $\mathfrak{g}$-module.

But once $M$ is a \emph{differential} $\mathfrak{g}$-module, there is yet another map which we call $\mathbb{J}$:
\begin{align*}
 \mathbb{J} \colon \ C^n(\mathfrak{g} , M) \rightarrow C^n(\mathfrak{g} , \operatorname{Hom}(\mathfrak{g},M)),
 \qquad
 \mathbb{J}(a) = \bigl[\eta\mapsto - \iota^M_{\eta}a\bigr].
 \end{align*}
It intertwines the two complexes:
\begin{equation*}
 \mathbb{J} {\rm d}^M_{\rm Lie} = {\rm d}^{{\rm Hom}(\mathfrak{g},M)}_{\rm Lie}\mathbb{J}.
 \end{equation*}
Indeed, in notations of Section \ref{BRSTLanguage},
\begin{equation*}
 {\cal F}\mathbb{J} = - \iota^M_{\eta} {\cal F}.
 \end{equation*}
It acts just on $M$, and does not know anything about $g(\zeta)$.\footnote{For example,
\begin{gather*} \mathbb{J} {\rm d}_{\rm Lie} {\rm d}X^m =
 \mathbb{J} \biggl( -\frac{1}{2}\bigl(\gamma_L\Gamma^m{\rm d}\theta_L\bigr) -\frac{1}{2}\bigl(\gamma_R\Gamma^m{\rm d}\theta_R\bigr) \biggr) = -\frac{1}{2}\bigl(\gamma_L\Gamma^m\eta_L\bigr) -\frac{1}{2}\bigl(\gamma_R\Gamma^m\eta_R\bigr),
 \\ {\rm d}_{\rm Lie}\mathbb{J} {\rm d}X^m =
 {\rm d}_{\rm Lie}\biggl( \eta^m -\frac{1}{2}\bigl(\eta_L\Gamma^m\theta_L\bigr) -\frac{1}{2}\bigl(\eta_R\Gamma^m\theta_R\bigr) \biggr)
 \\ \hphantom{{\rm d}_{\rm Lie}\mathbb{J} {\rm d}X^m}{}=
 - \bigl(\eta_L\Gamma^m\gamma_L\bigr) - \bigl(\eta_R\Gamma^m\gamma_R\bigr) +\frac{1}{2}\bigl(\eta_L\Gamma^m\gamma_L\bigr) +\frac{1}{2}\bigl(\eta_R\Gamma^m\gamma_R\bigr)
 - \frac{1}{2}\bigl(\eta_L\Gamma^m\gamma_L\bigr) -\frac{1}{2}\bigl(\eta_R\Gamma^m\gamma_R\bigr).
 \end{gather*}}
Notice that $\mathbb{I} + \mathbb{J}$ intertwines
cochain complexes with values in $M$ and $\operatorname{Hom}(\mathfrak{g},M)$
with differential ${\rm d}_M+{\rm d}_{\rm Lie}$.

With these notations equation~(\ref{CocycleExt}) can be written as follows:
\begin{equation*}
 \Psi = \mathbb{I}V - \mathbb{J}\alpha.
 \end{equation*}
In particular, if $\alpha = 0$, then equation~(\ref{CurrentsTransformCovariantly}) is satisfied:
\begin{itemize}\itemsep=0pt
\item If Lagrangian is invariant, then currents transform covariantly.
\end{itemize}
But the precise conditions for the covariant transformation of currents are weaker than the invariance of Lagrangian,
and we will now derive it. We will start by establishing a relation to the formalism of equivariant cohomology.

\subsection{Relation to equivariant cohomology}

Various relations between $V$, $\alpha$, $W$ and $\Psi$ are encoded in the
complex known as ``BRST model of equivariant cohomology''. We will now briefly review the construction of this
complex, and explain how it is useful in our context.

We will start with considering another complex, known as ``Weil model''. This is a direct product
of two complexes, and the differential is the sum of two differentials, equation~(\ref{DifferentialInWeilModel}).
The BRST model is related to it by a similarity transformation, equation~(\ref{KalkmanMap}).

\subsubsection{Weil model}

Weil algebra $\cal W$ is a differential graded super-commutative algebra formed
by generators\footnote{Often $C$ is called $A$ and $\eta$ is called $F$;
 in \cite{Cordes:1994fc} $C$ is $\theta$ and $\eta$ is $\phi$.}~$C$ and~$\eta$
with the differential
\begin{equation*}
{\rm d}_{\cal W} = \eta {\partial\over\partial C} + \frac{1}{2}[C,C]{\partial\over\partial C} + [C,\eta] {\partial\over\partial\eta}.
 \end{equation*}
We consider the space of functions of $C$ and $\eta$ with values in $M$, with the differential
\begin{equation}
{\rm d}_{\rm tot} = {\rm d}_{\cal W} + {\rm d}_M,
 \label{DifferentialInWeilModel}\end{equation}
where ${\rm d}_M$ is the differential of~$M$.

\subsubsection{BRST model}

We have
\begin{equation}
 {\rm e}^{-\iota_A C^A} ({\rm d}_{\cal W} + {\rm d}_M) {\rm e}^{\iota_A C^A} =
{\rm d}_{\rm Lie} +{\rm d}_M + \mathbb{I} + \mathbb{J},
 \label{KalkmanMap}\end{equation}
where ${\rm d}_{\rm Lie}$ is the Lie algebra cohomology differential with coefficients
in 
$\operatorname{Hom}(S^{\bullet}\adg, M)$.
\big(In this language $\mathbb{I} = \eta{\partial\over\partial C}$.\big)
The nilpotence of ${\rm d}_{\rm Lie} + {\rm d}_M + \mathbb{I} + \mathbb{J}$ follows from the nilpotence of~${{\rm d}_{\cal W} + {\rm d}_M}$.

We have
\begin{equation}
 ({\rm d}_{\rm Lie} + {\rm d}_M + \mathbb{I} + \mathbb{J})(V - \alpha) = W + \Psi - {\rm d}_M\alpha - \mathbb{I}\alpha,
 \label{DtotExactExpression}\end{equation}
where
\begin{gather*} W =
{\rm d}_{\rm Lie}V, \qquad \Psi =
 \mathbb{I}V - \mathbb{J}\alpha.
 \end{gather*}

It is useful to consider the ``total ghost number'' defined as follows:
\begin{gather*}
 N_{\rm tot} = \mbox{(form rank)} + C{\partial\over\partial C} + 2 \eta{\partial\over\partial \eta},
\\
N_{\rm tot}(V-\alpha) = 2(V-\alpha),
 \\
N_{\rm tot}(W + \Psi - {\rm d}\alpha - \mathbb{I}\alpha) = 3(W + \Psi - {\rm d}\alpha - \mathbb{I}\alpha).
 \end{gather*}
The grading table is
\begin{center}
\begin{tabular}{|l|l|l|l|l|}
\hline \rule{0pt}{12pt} &$N_{\rm tot}$&form&$C{\partial\over\partial C}$&$2 \eta{\partial\over\partial \eta}$\\
\hline $V$&$2$&$0$&$2$&$0$\\
\hline $\alpha$&$2$&$1$&$1$&$0$\\
\hline $W$&$3$&$0$&$3$&$0$\\
\hline $\Psi$&$3$&$0$&$1$&$2$\\
\hline ${\rm d}\alpha$&$3$&$2$&$1$&$0$\\
\hline $\mathbb{I}\alpha$&$3$&$1$&$0$&$2$\\
\hline \end{tabular}
\end{center}

\noindent
The term with
$\bigl({\rm form},C{\partial\over\partial C},2 \eta{\partial\over\partial \eta}\bigr) = (1,2,0)$
is missing in equation~(\ref{DtotExactExpression}), because of equation~(\ref{AlphaVsV}).
The nilpotence of ${\rm d}_{\cal W} + {\rm d}_M$ implies the nilpotence of ${\rm d}_{\rm Lie} + {\rm d}_M + \mathbb{I} + \mathbb{J}$;
this is called ``BRST model'' in~\cite{Cordes:1994fc}.

Therefore,
\begin{align}
{\rm d}_{\rm Lie} \Psi + \mathbb{I} W = 0
 \label{DLiePsi}, \qquad
{\rm d}_M\Psi = \mathbb{J}{\rm d}\alpha + {\rm d}_{\rm Lie} \mathbb{I}\alpha.
 \end{align}
Since $W$ is a constant
\begin{equation*}
{\rm d}_{\rm Lie} \Psi = 0 \mod \text{const}.
 \end{equation*}
(By $\mbox{const}$ we mean constant in the target space, i.e., independent of $X$, $\theta$, $\lambda$;
 it only depends on~$C$ and~$\eta$.)
Indeed, this is necessary for the consistency with equation~(\ref{EntersPsi}).

Equation~(\ref{DtotExactExpression}) is not completely general. Without changing the range of gradings on the right-hand side,
we can generalize it as follows:
\begin{align*} ({\rm d}_{\rm Lie} + {\rm d}_M + \mathbb{I} + \mathbb{J})(V - \alpha + \varphi) =
 W + (\Psi+{\rm d}_{\rm Lie}\varphi) - {\rm d}_M\alpha - (\mathbb{I}\alpha - {\rm d}_M\varphi),
\end{align*}
where
\begin{gather*}
\varphi \in C^0\bigl(\mathfrak{g},\operatorname{Hom}\bigl(\adg,M^0\bigr)\bigr) = \operatorname{Hom}\bigl(\adg,M^0\bigr).
\end{gather*}
The gradings on the right-hand side are still $(0,3,0)$, $(0,1,2)$ and $(1,0,2)$.
But $\varphi$ can be removed by adding
an exact expression. Indeed, given $\psi\in C^1\bigl(\mathfrak{g},M^0\bigr)$, the exact expression is: $V' - \alpha' + \varphi' = ({\rm d}_{\rm Lie} + {\rm d}_M + \mathbb{I} -\mathbb{J})\psi$;
$\alpha'=-{\rm d}_M\psi$, $V' = {\rm d}_{\rm Lie}\psi$ and $\varphi' = \mathbb{I}\psi$.
This corresponds to the ambiguity of adding a total derivative to~$\alpha$.

\subsection{Obstacles to covariance of currents}

Equation~(\ref{DLiePsi}) implies that the first obstacle to the covariance of currents is
\begin{equation}
 [\mathbb{I}W]\in H^2(\mathfrak{g},\operatorname{Hom}(\adg, \mathbb{R})).
\label{IWObstacle}\end{equation}
If $[\mathbb{I}W]=0$, then we correct $\Psi$ by adding to it a constant, so that
${\rm d}_{\rm Lie}\Psi = 0$.
In that case, the next obstacle is
\begin{equation}
 [\Psi]\in H^1(\mathfrak{g},\operatorname{Hom}(\adg, C^{\infty}(X))).
 \label{PsiObstacle}\end{equation}
In many situations, this cohomology group is zero due to Shapiro's lemma, and currents transform covariantly.

To summarize, the condition of the covariance of the currents is that we can choose $\Psi=\mbox{const}$.
When this is so, equation~(\ref{CocycleExt}) implies
\begin{equation}
 \iota_{v\langle\eta\rangle}\alpha\langle\xi\rangle + (\xi\leftrightarrow\eta) = \mbox{const}.
 \label{Symmetrization}\end{equation}
This follows from equation~(\ref{CourantBracket}) and equation~(\ref{CurrentsTransformCovariantly}) symmetrized under $\xi\leftrightarrow\eta$.
When equation~(\ref{Symmetrization}) holds, equation~(\ref{CurrentsTransformCovariantly}) becomes equivalent to
\begin{equation*}
 {\cal L}_{v\langle\xi\rangle}\alpha\langle\eta\rangle -
 {\cal L}_{v\langle\eta\rangle}\alpha\langle\xi\rangle -
 \alpha\langle[\xi,\eta]\rangle
 =
{\rm d}\biggl(\frac{1}{2}(\iota_{v\langle\xi\rangle}\alpha\langle\eta\rangle - \iota_{v\langle\eta\rangle}\alpha\langle\xi\rangle)\biggr).
\end{equation*}
This means that we can choose $V$ as follows:
\begin{equation*}
 V\langle\xi\wedge\eta\rangle
 =
 \frac{1}{2}\bigl(\iota_{v\langle\xi\rangle}\alpha\langle\eta\rangle - \iota_{v\langle\eta\rangle}\alpha\langle\xi\rangle\bigr).
\end{equation*}
To summarize, here are the conditions for the currents transforming covariantly,
i.e., $\delta_{\xi}j\langle\eta\rangle = j\langle[\xi,\eta]\rangle$:
\begin{align}
&\iota_{v\langle\eta\rangle}\alpha\langle\xi\rangle + (\xi\leftrightarrow\eta) =
 \mathfrak{C}\langle\xi\bullet\eta\rangle
 , \qquad \mbox{where $\mathfrak{C}\langle\xi\bullet\eta\rangle$ is constant}, \label{FirstConditionForCovariance}
 \\
 & V\langle\xi\wedge\eta\rangle=
 \frac{1}{2}\bigl(\iota_{v\langle\xi\rangle}\alpha\langle\eta\rangle - \iota_{v\langle\eta\rangle}\alpha\langle\xi\rangle\bigr),
 \label{SecondConditionForCovariance} \end{align}
where $\xi\bullet\eta$ stands for symmetric tensor product, and $V$ is from equation~(\ref{DescentFirstStep}).

\subsection{Strong covariance condition}

Equation~(\ref{Symmetrization}) can be strengthened, by requiring:
$\mathfrak{C}\langle\xi\bullet\eta\rangle = 0$. Then we have
\begin{equation*}
 \iota_{v\langle\xi\rangle}\alpha\langle\eta\rangle + \iota_{v\langle\eta\rangle}\alpha\langle\xi\rangle = 0.
 \end{equation*}
This is equivalent to the existence of a 2-form $b\in\Omega^2(X)$ such that
\begin{equation*}
 \alpha\langle\xi\rangle = \iota_{v\langle\xi\rangle}b.
\end{equation*}
Then
\begin{align}
({\rm d}_{\rm Lie}\alpha)\langle\xi\wedge\eta\rangle=
 {}& {\cal L}_{v\langle\xi\rangle}\alpha\langle\eta\rangle -
 {\cal L}_{v\langle\eta\rangle}\alpha\langle\xi\rangle -
 \alpha\langle[\xi,\eta]\rangle
 = \frac{1}{2}\bigl(
 {\cal L}_{v\langle\xi\rangle}\iota_{\eta} + \iota_{v\langle\eta\rangle}{\cal L}_{\xi}
 - (\xi\leftrightarrow\eta)\bigr)b \nonumber
 \\ =
 {}& {\rm d}\iota_{v\langle\xi\rangle}\iota_{\eta}b - \iota_{v\langle\xi\rangle}\iota_{\eta}{\rm d}b.
 \label{DLieVsB}
 \end{align}
In particular, if $\iota_{v\langle\xi\rangle}\iota_{v\langle\eta\rangle}{\rm d}b=0$,
then $V\langle\xi\wedge\eta\rangle = \iota_{v\langle\xi\rangle}\iota_{v\langle\eta\rangle}b$
(a.k.a.\ $\iota_{v\langle\xi\rangle\wedge v\langle\eta\rangle}b$)
satisfies the cocycle condition:
\begin{equation*}
 ({\rm d}_{\rm Lie}V)\langle\xi\wedge\eta\wedge\zeta\rangle = \bigl[{\rm d},\iota_{v\langle\xi\rangle\wedge v\langle\eta\rangle \wedge v\langle\zeta\rangle}\bigr]b = 0.
 \end{equation*}
Under this strong covariance condition,
\begin{equation*}
 ({\rm d}_{\rm Lie} + {\rm d} + \mathbb{I} + \mathbb{J})(V - \alpha +b) = H = {\rm d}b.
 \end{equation*}

\subsection{Asymmetric covariance condition}

It is also possible to formulate a weaker condition. We can ask that for some two
symmetries~$\xi$ and~$\eta$:
\begin{equation*}
 \delta_{\xi}j\langle\eta\rangle = j\langle[\xi,\eta]\rangle
 \end{equation*}
but perhaps $\delta_{\eta}j\langle\xi\rangle \neq j\langle[\eta,\xi]\rangle$.
This weaker condition follows directly from equations~(\ref{CourantBracket}) and~(\ref{DescentFirstStep}):
\begin{equation}
 V\langle\xi\wedge\eta\rangle = - \iota_{v\langle\eta\rangle}\alpha\langle\xi\rangle \mod\mbox{const}.
 \label{AsymmetricCovarianceCondition}
 \end{equation}
This is \emph{not} symmetric with respect to
$\xi\leftrightarrow\eta$. For $\Psi\langle\eta\rangle\langle\xi\rangle$ we have,
then
\begin{equation*}
 \Psi\langle\eta\rangle\langle\xi\rangle = \iota_{v\langle\eta\rangle}\alpha\langle\xi\rangle + \iota_{v\langle\xi\rangle}\alpha\langle\eta\rangle
 \mod\mbox{const}.
 \end{equation*}

\section{Case of WZW model}\label{WZW}

Here we will relate our considerations to the approach of \cite{Witten:1991mm} to WZW model.
The target space is the group manifold, and the symmetry is some combination of left and right shifts.

The non-invariance of the Lagrangian is due to the $B$-field not being strictly
invariant:
\begin{equation}
 {\cal L}_{v\langle\xi\rangle} B = \iota_{v\langle\xi\rangle} {\rm d}B + {\rm d}\iota_{v\langle\xi\rangle}B.
 \label{LB}\end{equation}
A special role in \cite{Witten:1991mm} is played by the one-forms $\lambda_a$ such that
\begin{equation}
 \iota_{v\langle\xi\rangle}{\rm d}B = \xi^a {\rm d}\lambda_a.
 \label{Lambdas}\end{equation}
Remember that $\alpha\langle\xi\rangle$ is defined up to ${\rm d}(\ldots)$.
Equations~(\ref{LB}) and (\ref{Lambdas}) imply that we can choose~$\alpha$ as follows:
\begin{equation}
 \alpha\langle\xi\rangle = \xi^a \lambda_a + \iota_{v\langle\xi\rangle} B.
 \label{DefLambdaWZW}\end{equation}
Therefore,
\begin{equation}
 \iota_{v\langle\eta\rangle}\alpha\langle\xi\rangle + (\xi\leftrightarrow\eta)
 =
 \eta^a\xi^b\bigl(\iota_{v\langle t_a\rangle}\lambda_b + \iota_{v\langle t_b\rangle}\lambda_a\bigr).
 \label{ObstacleToGauging}\end{equation}
The obstacle to gauging in \cite{Witten:1991mm} was the right-hand side of
equation~(\ref{ObstacleToGauging}), and is therefore very similar to equation~(\ref{FirstConditionForCovariance}).
All currents in WZW model do transform covariantly, therefore both
equations~(\ref{FirstConditionForCovariance}) and (\ref{SecondConditionForCovariance}) are satisfied.
However, only a subgroup can be gauged, where the constant on the right-hand side of equation~(\ref{FirstConditionForCovariance}) is zero.
The obstacle to gauging is $\mathfrak{C}\langle\xi\bullet\eta\rangle$.

It was shown in \cite{Witten:1991mm} that
\begin{equation*}
 {\cal L}_{v\langle\xi\rangle}\lambda\langle\eta\rangle = \lambda\langle[\xi,\eta]\rangle.
 \end{equation*}
Now we have
\begin{align*}
 {\cal L}_{v\langle\xi\rangle}\alpha\langle\eta\rangle -
 {\cal L}_{v\langle\eta\rangle}\alpha\langle\xi\rangle -
 \alpha\langle[\xi,\eta]\rangle =
{\rm d}\iota_{v\langle\xi\rangle}\iota_{v\langle\eta\rangle}B - \iota_{v\langle\xi\rangle}\iota_{v\langle\eta\rangle}{\rm d}B
 +\lambda\langle[\xi,\eta]\rangle.
 \end{align*}
The last two terms actually cancel:
\begin{equation*}
 \iota_{v\langle\xi\rangle}\iota_{v\langle\eta\rangle}{\rm d}B = \lambda\langle[\xi,\eta]\rangle
 \end{equation*}
and therefore
\begin{equation*}
 V\langle\xi\wedge\eta\rangle = \iota_{v\langle\xi\rangle}\iota_{v\langle\eta\rangle}B =
 \frac{1}{2}\bigl(\iota_{v\langle\xi\rangle}\alpha\langle\eta\rangle - \iota_{v\langle\eta\rangle}\alpha\langle\xi\rangle\bigr).
 \end{equation*}
In the language of ghosts,
\begin{equation*}
 V = \frac{1}{2}\iota_{v\langle C\rangle}\alpha\langle C\rangle.
 \end{equation*}

Then, equation~(\ref{DLieAndIota}) implies that
\begin{equation*}
 W\langle\xi\wedge\eta\wedge\zeta\rangle = \iota_{v\langle\xi\rangle}\iota_{v\langle\eta\rangle}\iota_{v\langle\zeta\rangle} {\rm d}B.
 \end{equation*}
We have
\begin{equation}
 ({\rm d}_{\rm Lie} + {\rm d} + \mathbb{I} + \mathbb{J})(V - \alpha + B) = W + \Psi + H + \lambda,
 \label{DtotWZW}\end{equation}
where $\lambda$ is from equation~(\ref{DefLambdaWZW}), considered as element of
$C^0\bigl(\mathfrak{g},\operatorname{Hom}\bigl(\mathfrak{g},\Omega^1(X)\bigr)\bigr)$, and other elements are
\begin{gather*} W=
{\rm d}_{\rm Lie} V \in C^3(\mathfrak{g},\mathbb{R}),
 \\ \Psi=
 \mathbb{I}V - \mathbb{J}\alpha \in C^1(\mathfrak{g},\operatorname{Hom}(\mathfrak{g},\mathbb{R})),
 \\ H=
{\rm d}B \in C^0\bigl(\mathfrak{g},\Omega^3(X)\bigr),
 \\ \lambda=
 -\mathbb{I}\alpha + \mathbb{J}B \in C^0\bigl(\mathfrak{g},\operatorname{Hom}\bigl(\mathfrak{g},\Omega^1(X)\bigr)\bigr).
 \end{gather*}
We have
\begin{align}
 & \mathbb{I}W + {\rm d}_{\rm Lie}\Psi = 0, 
 \nonumber\\
 & \mathbb{J}H + {\rm d}\lambda = 0,
 \label{IotaHisExact} \\
 & \Psi = \mathfrak{C}(\eta,C),
 \label{PsiVsC} \\
 & \mathbb{I}\Psi + \mathbb{J}\lambda = 0.
 \label{JLambdaIsIPsi} \end{align}
Notice that $H + \lambda$ is equivariantly closed (i.e., annihilated by ${\rm d} + \mathbb{J}$)
iff $\mathbb{J}\lambda = 0$. Equation~(\ref{JLambdaIsIPsi}) implies that this is equivalent to
\begin{equation*}
 \mathbb{I}\Psi = 0.
 \end{equation*}
Under this condition, also $W + \Psi$ is ${\rm d}_{\rm Lie} + \mathbb{I}$-closed. Equation~(\ref{PsiVsC}) implies that this is equivalent
to the vanishing of the obstacle to gauging in \cite{Witten:1991mm}:
\begin{equation*}
 \mathfrak{C}(\eta,\eta) = 0.
 \end{equation*}
Equation~(\ref{DtotWZW}) is the equivariant analogue of the WZW transgression relation
\begin{equation*}
 ({\rm d}_{\rm Lie} + {\rm d})(V - \alpha + B) = W + H
 \end{equation*}
replacing
\begin{gather*} {\rm d} \mapsto
 {\rm d} + \mathbb{J},
 \qquad {\rm d}_{\rm Lie} \mapsto
 {\rm d}_{\rm Lie} + \mathbb{I},
 \qquad H \mapsto
 H + \lambda,
 \qquad W \mapsto
 W + \Psi,
 \end{gather*}
${\rm d}$ with ${\rm d} + \mathbb{J}$ and ${\rm d}_{\rm Lie}$ with ${\rm d}_{\rm Lie} + \mathbb{I}$.
Notice that equation~(\ref{DtotWZW}) does not require the cancellation of anomaly in the sense of~\cite{Witten:1991mm}. In fact, $\mathbb{J}\lambda$ cancels with $\mathbb{I}\Psi$, equation~(\ref{JLambdaIsIPsi}).

\section{Pure spinor sigma-model}

In this section, we will give a very brief review of the pure spinor worldsheet sigma-model

Most of the applications in this paper are to the Type II pure spinor superstring sigma-model~\cite{Berkovits:2000fe,Berkovits:2001ue}.
This sigma-model can be constructed for any consistent Type II supergravity background.
The set of fields includes some matter fields taking values in a supermanifold (space-time),
and the so-called ``pure spinor ghosts''. The pure spinor ghosts are called~$\lambda^{\alpha}_L$ and~$\lambda^{\hat{\alpha}}_R$,
they are constrained to take values in a cone:
\begin{equation*}
 \bigl(\lambda_L^{\alpha}\Gamma_{\alpha\beta}^m\lambda_L^{\beta}\bigr)
 =
 \bigl(\lambda_R^{\hat{\alpha}}\Gamma_{\hat{\alpha}\hat{\beta}}^m\lambda_R^{\hat{\beta}}\bigr)
 =
 0,
 \end{equation*}
where $\Gamma^m_{\alpha\beta}$ are the ten-dimensional gamma-matrices.

In special backgrounds, such as flat space and ${\rm AdS}_5\times S^5$, the sigma-model has global symmetries,
basically the isometries of the background. But, crucially, there are two fermionic symmetries present
at \emph{any} background. They are denoted $Q_L$ and $Q_R$ and called ``BRST symmetries''.
They are both nilpotent and anticommuting:
\begin{equation*}
 \{Q_L,Q_L\} = \{Q_R,Q_R\} = \{Q_L,Q_R\} = 0.
 \end{equation*}
Moreover, the Noether current corresponding to $Q_L$ can be chosen to be holomorphic,
and the Noether current corresponding to $Q_R$ antiholomorphic. They are
\begin{equation*}
 j_+\langle Q_L\rangle = \lambda_L^{\alpha} {\rm d}_{\alpha +},\qquad
 j_-\langle Q_R\rangle = \lambda_R^{\hat{\alpha}} {\rm d}_{\hat{\alpha}-}.
 \end{equation*}
In flat space,
\begin{gather} {\rm d}_{\alpha +}=
 p_{\alpha +} + \frac{1}{2}\partial_+ x^m \Gamma^m_{\alpha\beta} \theta_L^{\beta} +
 {1\over 8}\bigl(\theta_L^{\gamma}\Gamma_{\gamma\delta}^m\partial_+\theta_L^{\delta}\bigr)\Gamma^m_{\alpha\beta}\theta_L^{\beta},
 \label{DPlus} \\ {\rm d}_{\hat{\alpha} -}=
 p_{\hat{\alpha} -} + \frac{1}{2}\partial_- x^m \Gamma^m_{\hat{\alpha}\hat{\beta}} \theta_R^{\hat{\beta}} +
 {1\over 8}\bigl(\theta_R^{\hat{\gamma}}\Gamma_{\hat{\gamma}\hat{\delta}}^m\partial_-\theta_R^{\hat{\delta}}\bigr) \Gamma^m_{\hat{\alpha}\hat{\beta}}\theta_R^{\hat{\beta}}.
 \label{DMinus} \end{gather}

\section{Universal vertex operator}\label{UniversalVertex}

Pure spinor superstring in any curved background has a natural deformation,
 which is the change of the overall scale of the Lagrangian.
 It can be thought of as varying the string tension~$1\over \alpha'$.
 In this section, we will explain how this ``universal vertex operator''
 is related to $\alpha\langle Q\rangle$, equation~(\ref{UniversalDescent2}).

\subsection{Definition for arbitrary curved background}

Equation~(\ref{DescentFirstStep}) implies
\begin{equation}
 {\cal L}_Q\alpha\langle Q\rangle = \frac{1}{2} {\rm d} V\langle Q\wedge Q\rangle.
 \label{LQAQ}\end{equation}
An important role is played by the \emph{ghost number symmetry}.
In other words, we assume that $X$ is a \emph{graded supermanifold} of degree one.
(Where $x$ and $\theta$ are of degree zero, and $\lambda$ of degree one.)
Let $N$ be the grading vector field
\begin{equation*}
 N = \lambda_L^{\alpha}{\partial\over\partial\lambda_L^{\alpha}} +
 \lambda_R^{\hat{\alpha}}{\partial\over\partial\lambda_R^{\hat{\alpha}}}
 \end{equation*}
and $f_0$ be a function of degree zero (i.e., a function of $x$ and $\theta$ only).
A direct examination of the sigma-model Lagrangian in \cite{Berkovits:2001ue} shows that
it is exactly invariant under $N$ (since it is just a~phase rotation of ghosts and their momenta).
Therefore,
\begin{equation*}
 \alpha\langle N\rangle = 0.
 \end{equation*}
Using the freedom of adding a total derivative to $\alpha\langle Q\rangle$, we can choose $\alpha\langle Q\rangle$ so that
\begin{equation}
 \iota_N\alpha\langle Q\rangle = 0.
 \label{IotaNAlpha}
 \end{equation}
Indeed, if $\iota_N\alpha\neq 0$, we modify
\begin{equation*}
 \alpha \mapsto \alpha - {\rm d}\iota_N\alpha.
 \end{equation*}
Also notice that we can choose $V(N,Q)=0$.
This is because $N$ is just a phase rotation, therefore ${\cal L}_N\alpha\langle Q\rangle = \alpha\langle [N,Q]\rangle$.
With equation~(\ref{IotaNAlpha}) this implies
\begin{equation*}
 \Psi\langle Q_L\rangle\langle N\rangle = \Psi\langle Q_R\rangle\langle N\rangle = 0.
 \end{equation*}
Therefore,
\begin{align}
 & Q_L j\langle N\rangle = - j\langle Q_L\rangle,
 \label{QLExact} \\
 & Q_R j\langle N\rangle = - j\langle Q_R\rangle.
 \label{QRExact} \end{align}
In other words, both $j\langle Q_L\rangle$ and $j\langle Q_R\rangle$ are BRST exact.
Taking into account that ${\cal L}_NV = 2V$, equation~(\ref{LQAQ}) implies
\begin{equation}
 V = \frac{1}{2}\iota_N {\rm d} V = \iota_N {\cal L}_Q\alpha = - \iota_Q\alpha.
 \label{UniversalVasIotaAlpha}\end{equation}
Given equation~(\ref{CocycleExt}), this is consistent with BRST currents being themselves BRST invariant,
in any background. As we know, they are both BRST exact, equations~(\ref{QLExact}) and (\ref{QRExact}).

To summarize:
\begin{align}
 QL =
{\rm d}\alpha\langle Q\rangle, 
\qquad {\cal L}_Q\alpha\langle Q\rangle =
 {\rm d}\biggl(-\frac{1}{2}\iota_Q\alpha\langle Q\rangle\biggr).
 \label{UniversalDescent2} \end{align}
We must stress that this descent procedure does not require the use of the worldsheet equations of motion.
The second step, equation~(\ref{UniversalDescent2}), involves purely geometrical objects in the target space:
vector field $Q$ and one-form $\alpha\langle Q\rangle$. This works for all pure spinor backgrounds.

\subsection{Generalization}

Suppose that $\mathfrak{g}$ is a graded Lie superalgebra. The grading operator will be denoted
$n\colon \mathfrak{g}\rightarrow\mathfrak{g}$. Let us assume that $n$ is diagonalizable, and all its eigenvalues nonzero
\big(in particular, exists $n^{-1}$\big).
Moreover, we will assume that there is a vector field $N$ on $X$, a symmetry of the Lagrangian, such that
\begin{gather*} [N,v\langle\xi\rangle]=
 v\langle n\xi\rangle,
 \qquad {\cal L}_N\alpha\langle\xi\rangle=
 \alpha\langle n\xi\rangle,
 \qquad {\cal L}_NV\langle\xi\wedge\eta\rangle=
 V\langle n\xi\wedge\eta\rangle + \langle\xi\wedge n\eta\rangle.
 \end{gather*}

We can modify $\alpha$ as follows:
\begin{equation*}
 \alpha\langle\xi\rangle
 \mapsto
 \alpha\langle\xi\rangle - {\rm d}\iota_N\alpha\bigl\langle n^{-1}\xi\bigr\rangle.
 \end{equation*}
Then the modified $\alpha$ satisfies
\begin{equation*}
 \iota_N\alpha = 0.
 \end{equation*}
We have
\begin{align}
 V\langle n\xi\wedge \eta + \xi\wedge n\eta\rangle &{}= \iota_N {\rm d}V\langle\xi\wedge\eta\rangle
= \iota_N \bigl({\cal L}_{v\langle\xi\rangle}\alpha\langle\eta\rangle -
{\cal L}_{v\langle\eta\rangle}\alpha\langle\xi\rangle -
\alpha\langle[\xi,\eta]\rangle
\bigr) \nonumber
\\&{} =
 (-)^{\bar{\xi}}\iota_{v\langle n\xi\rangle}\alpha\langle\eta\rangle - (-)^{\bar{\eta}}\iota_{v\langle n\eta\rangle}\alpha\langle\xi\rangle.
 \label{UsingN} \end{align}
Suppose that equation~(\ref{FirstConditionForCovariance}) is satisfied, then equation~(\ref{UsingN}) implies
\begin{equation*}
 V\langle \xi\wedge\eta\rangle = \frac{1}{2}\bigl(\iota_{v\langle\xi\rangle}\alpha\langle\eta\rangle - (\xi\leftrightarrow\eta)\bigr).
 \end{equation*}
This is a generalization of equation~(\ref{UniversalVasIotaAlpha}). If we assume that all
eigenvalues of $n$ are positive, then it follows that $V$ is a cocycle:
\begin{equation*}
{\rm d}_{\rm Lie}V = 0.
 \end{equation*}
If its cohomology class in $H^2(\mathfrak{g},C^{\infty}(X))$ is nonzero, then it is a generalization of the universal vertex operator.

\section{Contact terms}\label{ContactTerms}

We will here study the contact terms in the OPE of left and right BRST currents.
 These contact terms do not have classical analogue.

\subsection{BRST currents and universal vertex}

Since $j\langle Q_L\rangle$ is holomorphic and $j\langle Q_R\rangle$ is antiholomorphic, their OPE
is purely a contact term:
\begin{equation*}
 j\langle Q_L\rangle (z,\bar{z}) j\langle Q_R\rangle (0,0) = V \delta^2(z,\bar{z}).
 \end{equation*}
Since both currents are exactly BRST invariant, it follows:
\begin{equation*}
 (Q_L + Q_R)V = 0.
 \end{equation*}
We \emph{conjecture} that $V$ is in fact the universal vertex operator.
(In particular, this implies that the universal vertex operator has ghost number $(1,1)$.)

In flat space-time, the contact term is somewhat ill-defined,
since the conserved currents are
only well-defined up to terms which vanish on-shell.
However, in the general curved space-time, there is no ambiguity.
Indeed, all equations of motion have the conformal dimension $(1,1)$,
except those for pure spinor ghosts, which have dimension $(1,0)$ or $(0,1)$
(i.e., $\partial_-\lambda_L = \cdots$). The only possibility would be
adding something proportional to $\partial_+\lambda_R$ to $j\langle Q_L\rangle$,
and something proportional to $\partial_-\lambda_L$ to $j\langle Q_R\rangle$.
But that would have wrong ghost number.

We computed the contact terms in flat space (with the most economical definition of the currents)
and in ${\rm AdS}_5\times S^5$. In both cases,
the coefficient of $\delta^2(z,\bar{z})$ in the contact term is the universal vertex operator.

\subsection[Case of AdS\_5 times S\^{}5]{Case of $\boldsymbol{{\rm AdS}_5\times S^5}$}

The easiest way to compute them is to use
\begin{equation*}
 Q_L Q_R \bigl({\rm STr}\bigl(\bigl(w^L_{\alpha +} \lambda_L^{\alpha}\bigr)(z,\bar{z})\bigr)
 {\rm STr}\bigl(\bigl(w^R_- \lambda_R\bigr)(0,0)\bigr)\bigr)
 = \text{regular}.
 \end{equation*}
On the other hand, this is equal to
\begin{align*}
 & j_+\langle Q_L\rangle (z,\bar{z}) j_-\langle Q_R\rangle (0,0)
 + (Q_Rj_+\langle Q_L\rangle(z,\bar{z}))
 {\rm STr}\bigl(w^R_- \lambda_R\bigr)(0,0).
 \end{align*}
In the second term, $Q_Rj_+\langle Q_L\rangle(z,\bar{z})$ is zero on-shell. Whenever something is zero on-shell, there is
a corresponding vector field on the space of fields. In particular, the vector field corresponding
to $Q_Rj_+\langle Q_L\rangle(z\bar{z})$ is ${\rm STr}\bigl(\lambda_L {\delta\over\delta w^R_-}\bigr)$.
This implies
\begin{equation*}
 j_+\langle Q_L\rangle (z\bar{z})\; j_-\langle Q_R\rangle (0,0) = {\rm STr}(\lambda_L\lambda_R)\delta^2(z,\bar{z}).
 \end{equation*}

\subsection{Case of flat space}

The contact term is due to the contraction of $\partial_+x$ and $\partial_- x$ in equations~(\ref{DPlus}) and (\ref{DMinus}):
\begin{equation*}
 j_+\langle Q_L\rangle (z\bar{z})\; j_-\langle Q_R\rangle (0,0) = (\theta_L\Gamma^m\lambda_L)(\theta_R\Gamma^m\lambda_R)\delta^2(z,\bar{z}).
 \end{equation*}

\section{Pure spinor sigma-model in flat space}\label{PureSpinorFlat}

\subsection{Symmetries of the pure spinor sigma-model in flat space}

The (manifest) symmetries of the pure spinor sigma-model in flat space are:
\begin{itemize}\itemsep=0pt
\item The super-Poincar\'e group $\mathfrak{SP}$ which is the semi-direct sum of
 the supersymmetry group (translations and supersymmetries) and the Lorentz group.
 The Lie superalgebra of the super-Poincar\'e group will be denoted $\mathfrak{sP}$
 (notation: $\mathfrak{sP}=$ $\mathfrak{s}$uper-$\mathfrak{P}$oincaré)
\item Two odd nilpotent symmetries $Q_L$ and $Q_R$. The corresponding ghosts will be called $\chi_L$ and $\chi_R$.
\end{itemize}

The Lie superalgebra cohomology differential is
\begin{equation*}
{\rm d}_{\mathfrak{sP}\oplus{\bf R}^{0|2}_{\rm BRST}} = {\rm d}_\mathfrak{sP} + \chi_L Q_L + \chi_R Q_R.
 \end{equation*}
Explicitly, each part is
\begin{gather*}
{\rm d}_{\mathfrak{sP}} =
 c^\mu\frac{\partial}{\partial X^\mu} +
 \gamma_L^{\alpha}\biggl(\frac{\partial}{\partial\theta^{\alpha}_L} - \frac{1}{2}\theta^{\beta}_L\Gamma^\mu_{\beta\alpha}\frac{\partial}{\partial X^\mu}\biggr) + \gamma_R^{\hat\alpha}\biggl(\frac{\partial}{\partial\theta^{\hat\alpha}_R} - \frac{1}{2}\theta^{\hat\beta}_R\Gamma^\mu_{\hat\beta\hat\alpha}\frac{\partial}{\partial X^\mu}\biggr)
 \\ \hphantom{{\rm d}_{\mathfrak{sP}} =}{}
+ \frac{1}{2}\gamma_L^\alpha \gamma_L^\beta \Gamma^\mu_{\alpha\beta}\frac{\partial}{\partial c^\mu}
+ \frac{1}{2}\gamma_R^{\hat\alpha}\gamma_R^{\hat\beta}\Gamma^\mu_{\hat\alpha\hat\beta}\frac{\partial}{\partial c^\mu}
 \\ \hphantom{{\rm d}_{\mathfrak{sP}} =}{}
 + c^{[\mu\nu]} \biggl(X^{\mu}{\partial\over \partial X^{\nu}} +
 \biggl(\theta_L\Gamma_{\mu\nu}{\partial\over\partial\theta_L}\biggr) +
 \biggl(\theta_R\Gamma_{\mu\nu}{\partial\over\partial\theta_R}\biggr)\biggr)
 \end{gather*}
and
\begin{equation*}
 Q = \chi_L\lambda^\alpha_L\frac{\partial}{\partial\theta^\alpha_L} + \frac{\chi_L}{2}(\lambda_L\Gamma^\mu\theta_L)\frac{\partial}{\partial X^\mu}+\chi_R\lambda^{\hat\alpha}_R\frac{\partial}{\partial\theta_R^{\hat\alpha}} + \frac{\chi_R}{2}(\lambda_R\Gamma^\mu\theta_R)\frac{\partial}{\partial X^\mu}.
 \end{equation*}

\subsubsection{Supersymmetries}

The infinitesimal supersymmetry transformations are
\begin{align*} \xi_\epsilon =
 {}& - \frac{1}{2}(\epsilon_L\Gamma^\mu\theta_L)\frac{\partial}{\partial X^\mu}
 + \epsilon_L^\alpha\frac{\partial}{\partial\theta_L^\alpha}
 + \biggl(
 \frac{1}{2}(\epsilon_L\Gamma_\mu)_\alpha\partial_+ X^\mu +
 \frac{1}{8}(\epsilon_L\Gamma^\mu\theta_L)(\partial_+\theta_L\Gamma_\mu)_\alpha
 \biggr)\frac{\partial}{\partial p_{+\alpha}}
 \\
 &{}- \frac{1}{2}(\epsilon_R\Gamma^\mu\theta_R)\frac{\partial}{\partial X^\mu}
 + \epsilon_R^{\hat{\alpha}}\frac{\partial}{\partial\theta_R^{\hat{\alpha}}}
 + \biggl(
 \frac{1}{2}(\epsilon_R\Gamma_\mu)_{\hat{\alpha}}\partial_- X^\mu +
 \frac{1}{8}(\epsilon_R\Gamma^\mu\theta_R)(\partial_-\theta_R\Gamma_\mu)_{\hat{\alpha}}
 \biggr)\frac{\partial}{\partial p_{-\hat{\alpha}}} .
 \end{align*}
The corresponding one-form of equation~(\ref{NoetherCurrent}) is
\begin{gather*}
\alpha|_{\mathfrak{sP}}\langle\gamma,c\rangle =
 \frac{1}{4}(\gamma_L\Gamma^\mu\theta_L){\rm d}X_\mu - \frac{1}{24}(\gamma_L\Gamma^\mu\theta_L)({\rm d}\theta_L\Gamma_\mu\theta_L)
 \\ \hphantom{\alpha|_{\mathfrak{sP}}\langle\gamma,c\rangle =}{}
 - \frac{1}{4}(\gamma_R\Gamma^\mu\theta_R){\rm d}X_\mu + \frac{1}{24}(\gamma_R\Gamma^\mu\theta_R)({\rm d}\theta_R\Gamma_\mu\theta_R).
 \end{gather*}

\subsubsection{BRST transformations}

The BRST transformation is the following vector field on the space of worldsheet fields:
\begin{align*}
 Q ={}& \lambda_L\frac{\delta}{\delta\theta_L} +\frac{1}{2}(\lambda_L\Gamma^m\theta_L)\frac{\delta}{\delta X^m} + {\rm d}_+\frac{\delta}{\delta w_+}
\\
 &{} +\biggl(-\frac{1}{2}\partial_+ X^m(\lambda_L\Gamma_m) +\frac{3}{8}(\lambda_L\Gamma^m\theta_L)(\partial_+\theta_L\Gamma_m)+\frac{1}{8}(\partial_+\lambda_L\Gamma^m\theta_L)(\theta_L\Gamma_m)\biggr)\frac{\delta}{\delta p_+}
\\
 &{} + \lambda_R\frac{\delta}{\delta\theta_R} + \frac{1}{2}(\lambda_R\Gamma^m\theta_R)\frac{\delta}{\delta X^m} + {\rm d}_-\frac{\delta}{\delta w_-}
\\
 &{} + \biggl( -\frac{1}{2}\partial_- X^m(\lambda_R\Gamma_m) +\frac{3}{8}(\lambda_R\Gamma^m\theta_R)(\partial_-\theta_R\Gamma_m)+\frac{1}{8}(\partial_-\lambda_R\Gamma^m\theta_R)(\theta_R\Gamma_m) \biggr)\frac{\delta}{\delta p_-}.
 \end{align*}
The corresponding one-form of equation~(\ref{NoetherCurrent}) is
\begin{align*} \alpha|_{\mathbb{R}^{0|2}}\langle\chi\rangle =
 {}& -\frac{1}{4}\chi_L(\lambda_L\Gamma^m\theta_L)\biggl({\rm d}X_m-\frac{1}{2}\theta_L\Gamma_m{\rm d}\theta_L\biggr)
 \\
 &{} +\frac{1}{4}\chi_R(\lambda_R\Gamma^m\theta_R)\biggl({\rm d}X_m-\frac{1}{2}\theta_R\Gamma_m{\rm d}\theta_R\biggr).
 \end{align*}
It can be represented as
\begin{align} & \alpha\langle\chi\rangle=
 \iota_Q b,
 \label{AlphaChiVsB}
\end{align}
where
\begin{gather}
b = - {1\over 4}
 \left(
 ({\rm d}\theta_L\Gamma^m\theta_L) - ({\rm d}\theta_R\Gamma^m\theta_R)
 \right)
 {\rm d}X^m + {1\over 8}({\rm d}\theta_L\Gamma^m\theta_L)({\rm d}\theta_R\Gamma^m\theta_R),
 \label{DefBFlatSpace}
 \\
{\rm d}b = - {1\over 4}(({\rm d}\theta_L\Gamma^m{\rm d}\theta_L) - ({\rm d}\theta_R\Gamma^m{\rm d}\theta_R)){\rm d}X^m
 + {1\over 8}({\rm d}\theta_L\Gamma^m{\rm d}\theta_L)({\rm d}\theta_R\Gamma^m\theta_R) \nonumber
 \\ \hphantom{{\rm d}b =}{}- {1\over 8}({\rm d}\theta_L\Gamma^m\theta_L)({\rm d}\theta_R\Gamma^m{\rm d}\theta_R)\nonumber
 \\ \hphantom{{\rm d}b}{} =
 {1\over 4}(({\rm d}\theta_L\Gamma^m{\rm d}\theta_L) - ({\rm d}\theta_R\Gamma^m{\rm d}\theta_R))
 \biggl(-{\rm d}X^m + \frac{1}{2}({\rm d}\theta_L\Gamma^m\theta_L) - \frac{1}{2}({\rm d}\theta_R\Gamma^m\theta_R)\biggr).\nonumber
\end{gather}
Therefore, $\iota_Q^2 {\rm d} b = 0$ and equation~(\ref{DLieVsB}) implies that equation~(\ref{SecondConditionForCovariance}) is satisfied
and therefore the BRST current is BRST closed:
\begin{equation*}
 Qj\langle Q\rangle = 0.
 \end{equation*}
This is also true for general curved backgrounds, and is a consequence of the ghost number symmetry,
see equation~(\ref{UniversalVasIotaAlpha}).

\subsection{Ascending in form rank}

Because of Shapiro's lemma,
\begin{equation}
 H^1\bigl(\mathfrak{sP},\Omega^2(X)\bigr) = 0.
 \label{ShapiroOmega2}\end{equation}
Therefore, exists $b\in \Omega^2(X)$ such that
\begin{align} {\rm d}\alpha\langle\xi\rangle =
 {\cal L}_{\xi} b\qquad\mbox{for}\ \xi\in\mathfrak{sP}.
 \label{BField} \end{align}
This is the same $b$ as in equation~(\ref{DefBFlatSpace}), except equation~(\ref{BField}) defines it unambiguously
(while equation~(\ref{DefBFlatSpace}) only defines $b$ up to an expression having zero contraction with $Q$).
Its exterior derivative $H = {\rm d}b$ is super-Poincar\'e invariant:
\begin{equation*}
 {\cal L}_{\xi} H = 0.
 \end{equation*}
Equations~(\ref{AlphaChiVsB}) and (\ref{BField}) together imply for $\xi\in\mathfrak{sP}$ and $\chi\in\mathbb{R}^{0|2}$,
\begin{equation*}
 V\langle\xi\wedge\chi\rangle = \iota_{\chi} \alpha\langle\xi\rangle.
 \end{equation*}
This is a particular case of the ``asymmetric covariance'', equation~(\ref{AsymmetricCovarianceCondition}).
Namely, the BRST currents are super-Poincar\'e-invariant, but the super-Poincar\'e currents are not
BRST-closed, see Section \ref{sec:sPNotBRSTClosed}.

\subsection{Descending in form rank}

We will use the Faddeev--Popov language. Let $\gamma_L^{\alpha}$,
$\gamma_R^{\hat{\alpha}}$ be the ghosts for left and right supersymmetries,
$c^{\mu}$ and $c^{[\mu\nu]}$ the ghosts for translations and rotations.
We will also consider the BRST symmetries. The ghosts for BRST symmetries
will be denoted $\chi_L$ and $\chi_R$. The descent goes as follows:
\begin{gather}
 {\rm d}_{\rm Lie}\alpha={\rm d}V, \nonumber
 \\ V =
 \frac{1}{8}(\gamma_L\Gamma^\mu\theta_L)(\gamma_R\Gamma_\mu\theta_R)
 + \frac{1}{4}(\gamma_L\Gamma_{\mu}\gamma_L)X^{\mu} - \frac{1}{4}(\gamma_R\Gamma_{\mu}\gamma_R)X^{\mu} -\frac{\chi_L}{8}(\lambda_L\Gamma^\mu\theta_L)(\gamma_R\Gamma_\mu\theta_R) \nonumber
 \\ \hphantom{V =}{}
 +\frac{\chi_R}{8}(\lambda_R\Gamma^\mu\theta_R)(\gamma_L\Gamma_\mu\theta_L)
 -\frac{\chi_R}{12}(\lambda_R\Gamma^\mu\theta_R)(\gamma_R\Gamma_\mu\theta_R)+\frac{\chi_L}{12}(\lambda_L\Gamma^\mu\theta_L)(\gamma_L\Gamma_\mu\theta_L) \nonumber
 \\ \hphantom{V =}{}
 - \frac{\chi_L\chi_R}{8}(\lambda_L\Gamma^\mu\theta_L)(\lambda_R\Gamma_\mu\theta_R),
 \label{FlatSpaceV}
 \\
 {\rm d}_{\rm Lie} V = W, \nonumber
 \\ W =
 {1\over 4}(\gamma_L\Gamma_{\mu}\gamma_L)c^{\mu} - {1\over 4}(\gamma_R\Gamma_{\mu}\gamma_R)c^{\mu}.
 \label{WInFlatSpace}
 \end{gather}
The last line of $V$ contains (as the coefficient of $\chi_L\chi_R$) the universal vertex operator of flat space,
i.e., $(\lambda_L\Gamma^\mu\theta_L)(\lambda_R\Gamma_\mu\theta_R)$.

The 3-cocycle $W$ is nontrivial in $H^3(\mathfrak{sP})\subset H^3\bigl(\mathfrak{sP}\oplus {\bf R}^{0|2}_{\rm BRST}\bigr)$, it
does not involve $\chi_L$ and $\chi_R$.
In fact, it could not possibly contain $\chi_L$ and $\chi_R$, because it is constant (does not depend on~$X$, $\theta_L$, $\theta_R$). The degree of $\chi$ equals the ghost number. Expressions with nonzero ghost number can not be constant.

\subsection[Relation between H=db and W]{Relation between $\boldsymbol{H={\rm d}b}$ and $\boldsymbol{W}$}

Now we have the relation
\begin{equation*}
 H + W = ({\rm d} + {\rm d}_{\mathfrak{sP}}) (b - \alpha_{\mathfrak{sP}} + V_{\mathfrak{sP}})
 \end{equation*}
and its equivariant analogue
\begin{equation*}
 H + \lambda_{\mathfrak{sP}}+\Psi_{\mathfrak{sP}} + W =
 ({\rm d} + {\rm d}_{\mathfrak{sP}} + \mathbb{I}_{\mathfrak{sP}} + \mathbb{J}_{\mathfrak{sP}})
 (b - \alpha_{\mathfrak{sP}} + V_{\mathfrak{sP}}).
 \end{equation*}
Such relation exists because $b$ exists because of equation~(\ref{ShapiroOmega2}).
It is limited to $\mathfrak{sP}\subset \mathfrak{sP}\oplus \mathbb{R}^{0|2}$, because
there is ``no $b$ for $Q$''. There is no such $b$ which would satisfy ${\cal L}_Q b = {\rm d}\alpha\langle Q\rangle$.
The $H^1_Q\bigl(\Omega^2(X)\bigr)$ is nonzero.

\subsection[Non-covariance of sP currents]{Non-covariance of $\boldsymbol{\mathfrak{sP}}$ currents}

\subsubsection[sP currents are not sP-covariant]{$\boldsymbol{\mathfrak{sP}}$ currents are not $\boldsymbol{\mathfrak{sP}}$-covariant}

This is because $\mathbb{I}W$ is not zero in $H^2 (\mathfrak{g},\operatorname{Hom}(\adg, \mathbb{R}) )$,
see equation~(\ref{IWObstacle}). This is different from the case of WZW model.
In the case of $WZW$ model, $\mathbb{I}W$ is zero in $H^2 (\mathfrak{g},\operatorname{Hom}(\adg, \mathbb{R}) )$,
due to the existence of the invariant $\mathfrak{C}$, see equation~(\ref{PsiVsC}).

\subsubsection[sP currents are not BRST closed]{$\boldsymbol{\mathfrak{sP}}$ currents are not BRST closed}\label{sec:sPNotBRSTClosed}

Since $W$ is only nontrivial in $H^3(\mathfrak{sP})\subset H^3\bigl(\mathfrak{sP}\oplus {\bf R}^{0|2}_{\rm BRST}\bigr)$,
the corresponding component of $\mathbb{I}W$ is zero. In this case, the next obstacle, equation~(\ref{PsiObstacle}),
is nontrivial:
\begin{align*} &\Psi\in
 H^1_Q(\operatorname{Hom}_{\mathfrak{sP}}({\rm ad}_{\mathfrak{sP}}, C^{\infty}(X))).
 \end{align*}
Let us use
\begin{align*}
 & \Psi\langle \chi_L Q_L + \chi_R Q_R\rangle\langle \epsilon_L Q_L + \epsilon_R Q_R\rangle = 0,
 \\
 & (\chi_L Q_L + \chi_R Q_R)\Psi\langle \chi_L Q_L + \chi_R Q_R \rangle\langle \_|_{\mathfrak{sP}}\rangle = 0.
 \end{align*}
The first follows from the BRST closedness of the BRST current, and the second from
$\mathbb{I}W\langle Q\bullet Q\rangle = 0$.
Also, we use
\begin{equation*}
 (\mathbb{I}W)\langle C_{\mathfrak{sP}}\bullet C_{\mathfrak{sP}}\rangle\langle \_|_{\mathfrak{sP}}\rangle = 0.
 \end{equation*}
\big(We hope that the notations are not too confusing;
 $C_{\mathfrak{sP}}$ is the ghost for $\mathfrak{sP}$, and $\_|_{\mathfrak{sP}}$
 means that the argument (the $\eta$) is restricted to $\mathfrak{sP}\subset\mathfrak{sP}\oplus\mathbb{R}^{0|2}$.\big)
It implies
\begin{equation*}
 Q\Psi\langle C_{\mathfrak{sP}}\rangle \langle \_|_{\mathfrak{sP}}\rangle =
{\rm d}_{\mathfrak{sP}}(\Psi\langle Q\rangle \langle \_|_{\mathfrak{sP}}\rangle ).
 \end{equation*}
This implies that
\begin{gather*}
 \Psi\langle \chi_L Q_L + \chi_R Q_R \rangle\langle \_|_{\mathfrak{sP}}\rangle
 \mod (\chi_L Q_L + \chi_R Q_R)(\ldots)
 \\ \qquad\quad\in
 H^0\bigl(
 \mathfrak{sP},
 \operatorname{Hom}\bigl({\rm ad}_{\mathfrak{sP}}, H^1\bigl(\mathbb{R}^{0|2}, C^{\infty}(X)\bigr)\bigr)
 \bigr).
 \end{gather*}
In other words, it is an intertwiner between the adjoint of $\mathfrak{sP}$ and the ghost number one BRST cohomology.
This intertwiner associates to each element of $\mathfrak{sP}$ the corresponding ghost number one vertex operator.
It is enough to compute $\Psi\langle Q\rangle$ on an infinitesimal translation,
i.e., $\eta = {\partial\over\partial X^m}$. In this case, the $\mathbb{I}V$ in
$\Psi = \mathbb{I}V - \mathbb{J}\alpha$ does not contribute, and we are left with
$-\mathbb{J}\alpha$ which gives
\begin{equation*}
 \iota_{\partial/\partial X^m}\alpha\langle Q\rangle =
 - {1\over 4}\chi_L(\lambda_L\Gamma_m\theta_L) + {1\over 4}\chi_R(\lambda_R\Gamma_m\theta_R).
 \end{equation*}
Instead of being BRST closed, they satisfy
\begin{gather*}
Qj\langle\eta\rangle=
 {\rm d}\Lambda\langle\eta\rangle,
 \end{gather*}
where
\[ \Lambda\langle\eta\rangle=
 \Psi\langle Q\rangle\langle \eta\rangle
 = V\langle \eta\wedge Q \rangle + \iota_{\eta}\alpha\langle Q\rangle
 = \iota_{\eta}\alpha\langle Q\rangle + \iota_Q\alpha\langle \eta\rangle.
\]

\subsection[Cohomology of Q in one-forms]{Cohomology of $\boldsymbol{Q}$ in one-forms}\label{sec:HQinOneForms}

The last term in equation~(\ref{FlatSpaceV}) is BRST exact:
\begin{equation*}
 \frac{\chi_L\chi_R}{8}(\lambda_L\Gamma^\mu\theta_L)(\lambda_R\Gamma_\mu\theta_R)
 =
 (\chi_L Q_L + \chi_R Q_R)
 \bigl(
 {1\over 8} X^m (\chi_R(\lambda_R\Gamma^m\theta_R) - \chi_L(\lambda_L\Gamma^m\theta_R))
 \biggr).
 \end{equation*}
This implies that we can add a total derivative to $\alpha\langle Q\rangle$ so that the modified
(improved) $\alpha\langle Q\rangle$ is BRST-closed. It is given by the following expression,
which contains ``bare'' $X^m$:
\begin{gather*}
 \alpha_{\rm impr}\langle Q\rangle =
 {1\over 8}
 ( \chi_L (\lambda_L \Gamma^m \theta_L) {\rm d}X^m + \chi_L ({\rm d}(\lambda_L \Gamma^m \theta_L)) X^m
 \\ \hphantom{ \alpha_{\rm impr}\langle Q\rangle =}{}
 - \chi_L (\lambda_L \Gamma^m \theta_L)(\theta_L \Gamma^m {\rm d}\theta_L) - (L \leftrightarrow R)).
\end{gather*}
We still can not make the Lagrangian BRST-invariant, since $\alpha_{\rm impr}$ represents a nontrivial cohomology class:
\begin{equation*}
 [\alpha_{\rm impr}] \in H^1\bigl(Q,\Omega^1(X)\bigr).
 \end{equation*}
Moreover, the currents do not transform covariantly.
Indeed, the right-hand side of equation~(\ref{DeviationFromCovariance}) becomes ${\rm d} \iota_Q \alpha_{\rm impr}\langle Q\rangle$ with
\begin{equation*}
 \iota_Q \alpha_{\rm impr}\langle Q\rangle = - {1\over 8} \chi_L\chi_R (\lambda_L\Gamma^m\theta_L)(\lambda_R\Gamma^m\theta_R) \neq 0.
 \end{equation*}
On-shell $\alpha_{\rm impr}$ is BRST-equivalent to
${1\over 8}\chi_L(\lambda_L\Gamma^m\theta_L)\bigl(\partial X^m - \frac{1}{2}(\theta_L\Gamma^m\partial\theta_L)\bigr) - (L\leftrightarrow R)$.
In a general curved background, the universal vertex operator is a nontrivial
element of $H^2(Q,C^{\infty}(X))$. But flat space is a special case.
What would be a ghost-number-two universal vertex is now BRST exact. Instead, what prevents the Lagrangian from
being BRST invariant is a nontrivial element of $H^1\bigl(Q,\Omega^1(X)\bigr)$.

\subsection{Covariance with respect to a smaller subalgebra}

We can achieve covariance with respect to a smaller subalgebra $\mathfrak{h}\subset \mathfrak{sP}$,
if the restriction of~$W$ to~$\mathfrak{h}$ vanishes. For example, let $\mathfrak{h}$ be the
subalgebra preserving the half-BPS state, then $W$ vanishes. In fact,
\begin{equation*}
 V\langle\xi\wedge\eta\rangle = ({\rm d}_{\rm Lie} U)\langle\xi\wedge\eta\rangle
\qquad \text{and}\qquad
{\rm d}_{\rm Lie}(\alpha - {\rm d}U) = 0.
 \end{equation*}
This implies
\begin{equation*}
 \alpha\langle\xi\rangle = {\rm d}U\langle\xi\rangle + {\cal L}_{\xi} f.
 \end{equation*}
Then, the modified Lagrangian $L - f$ is $\mathfrak{h}$-invariant.

We will now show this explicitly. We will denote $X^{\pm} = X^0 \pm X^9$. We will choose the momentum
of the half-BPS state in direction $X^+$, then the following bosonic generators annihilate the state:
$ {\partial\over\partial X^-}$, ${\partial\over\partial X^i}$,
where $i$ runs from $1$ to $8$.
The ten-dimensional gamma-matrices are
\begin{alignat*}{4}
& \Gamma^i_{\bullet\bullet} = \begin{pmatrix}0 & \sigma^i_{a\dot{b}} \cr \sigma^i_{\dot{a}b} & 0\end{pmatrix},\qquad&&
 \Gamma^+_{\bullet\bullet} = \begin{pmatrix}0 & 0 \cr 0 & \delta_{\dot{a}\dot{b}}\end{pmatrix},\qquad&&
 \Gamma^-_{\bullet\bullet} = \begin{pmatrix} \delta_{ab} & 0 \cr 0 & 0 \end{pmatrix},&
\\
& \Gamma^{i\bullet\bullet} = \begin{pmatrix}0 & -\sigma^{ia\dot{b}} \cr -\sigma^{i\dot{a}b} & 0\end{pmatrix},\qquad&&
 \Gamma^{+\bullet\bullet} = \begin{pmatrix} \delta^{ab} & 0 \cr 0 & 0\end{pmatrix},\qquad&&
 \Gamma^{-\bullet\bullet} = \begin{pmatrix} 0 & 0 \cr 0 & \delta^{\dot{a}\dot{b}} \end{pmatrix}.&
\end{alignat*}
We have to restrict
\begin{equation*}
 c^+ = 0,\qquad \gamma_L^{\dot{a}} = 0,\qquad \gamma_R^{\dot{a}}=0.
 \end{equation*}
The first line of equation~(\ref{FlatSpaceV}) becomes
\begin{align*}
& V=
 - {1\over 8} \bigl(\gamma_L^a\sigma^i_{a\dot{a}}\theta_L^{\dot{a}}\bigr)\bigl(\gamma_R^a\sigma^i_{a\dot{a}}\theta_R^{\dot{a}}\bigr)
 + {1\over 4} \bigl(\gamma_L^a\sigma^i_{a\dot{a}}\theta_L^{\dot{a}}\bigr) c^i
 - {1\over 4} \bigl(\gamma_R^a\sigma^i_{a\dot{a}}\theta_R^{\dot{a}}\bigr) c^i.
 \end{align*}
It is ${\rm d}_{\rm Lie}$-exact:
\[
V = {\rm d}_{\rm Lie}U,
\]
where
\begin{gather*}
 U=\biggl( - {1\over 16} \bigl(\theta_L^a\sigma^i_{a\dot{a}}\theta_L^{\dot{a}}\bigr)\bigl(\gamma_R^a\sigma^i_{a\dot{a}}\theta_R^{\dot{a}}\bigr)
 + {1\over 16} \bigl(\gamma_L^a\sigma^i_{a\dot{a}}\theta_L^{\dot{a}}\bigr)\bigl(\theta_R^a\sigma^i_{a\dot{a}}\theta_R^{\dot{a}}\bigr)
 \\ \hphantom{U=\biggl(}{}
 + {1\over 4} \bigl(\theta_L^a\sigma_{a\dot{a}}^i\theta_L^{\dot{a}}\bigr)c^i
 - {1\over 4} \bigl(\theta_R^a\sigma_{a\dot{a}}^i\theta_R^{\dot{a}}\bigr)c^i
 \biggr)
 \end{gather*}
and
\begin{align*}
\alpha=
 &{} {\rm d}U + {\rm d}_{\rm Lie}f,
 \\ f=
 &{} -\frac{1}{8}\bigl(\theta_L^a{\rm d}\theta_L^a-\theta_R{\rm d}\theta_R\bigr)X_+ + \frac{1}{4}\bigl(\theta_L^a\sigma^i_{a\dot a}\theta_L^{\dot a}\bigr){\rm d}X_i- \frac{1}{4}\bigl(\theta_R^a\sigma^i_{a\dot a}\theta_R^{\dot a}\bigr){\rm d}X_i \\
 &{} - \frac{1}{8}\bigl(\theta_L^a\sigma^i_{a\dot a}\theta_L^{\dot a}\bigr)\bigl({\rm d}\theta_L^a\sigma^i_{a\dot a}\theta^{\dot a}_L\bigr) + \frac{1}{8}\bigl(\theta_R^a\sigma^i_{a\dot a}\theta_R^{\dot a}\bigr)\bigl({\rm d}\theta_R^a\sigma^i_{a\dot a}\theta^{\dot a}_R\bigr) \\
 &{} - \frac{1}{8}\bigl(\theta_L^a\sigma^i_{a\dot a}\theta_L^{\dot a}\bigr)\bigl({\rm d}\theta_L^{\dot a}\sigma^i_{a\dot a}\theta^{a}_L\bigr) + \frac{1}{8}\bigl(\theta_R^a\sigma^i_{a\dot a}\theta_R^{\dot a}\bigr)\bigl({\rm d}\theta_R^{\dot a}\sigma^i_{a\dot a}\theta^{a}_R\bigr) \\
 &{} - \frac{1}{16}\bigl(\theta_L^a\sigma^i_{a\dot a}\theta_L^{\dot a}\bigr)\bigl(\theta_R^a\sigma^i_{a\dot a}{\rm d}\theta_R^{\dot a}\bigr) - \frac{1}{16}\bigl(\theta_R^a\sigma^i_{a\dot a}\theta_R^{\dot a}\bigr)\bigl({\rm d}\theta_L^a\sigma^i_{a\dot a}\theta_L^{\dot a}\bigr)
 \\
 &{} + \frac{1}{16}\bigl(\theta_R^a\sigma^i_{a\dot a}\theta_R^{\dot a}\bigr)\bigl(\theta_L^a\sigma^i_{a\dot a}{\rm d}\theta_L^{\dot a}\bigr) + \frac{1}{16}\bigl(\theta_L^a\sigma^i_{a\dot a}\theta_L^{\dot a}\bigr)\bigl({\rm d}\theta_R^a\sigma^i_{a\dot a}\theta_R^{\dot a}\bigr).
 \end{align*}

\section[Pure spinor sigma-model in AdS\_5 times S\^{}5]{Pure spinor sigma-model in $\boldsymbol{{\rm AdS}_5\times S^5}$}\label{PureSpinorAdS}

\subsection[Lagrangian is invariant under psu(2,2|4)]{Lagrangian is invariant under $\boldsymbol{\mathfrak{psu}(2,2|4)}$}

This is because $H^3(\mathfrak{psu}(2,2|4))=0$. It is useful to compare this to flat
space. The flat space limit of $\mathfrak{psu}(2,2|4)$
is $\mathfrak{sP}_{5+5}\subset\mathfrak{sP}$ -- the subalgebra of the
super-Poincar\'e algebra, excluding those Lorentz rotations which do not preserve
the $5+5$ split of the tangent space.
We can consider the differential $\mathfrak{psu}(2,2|4)$ as deformed differential
of $\mathfrak{sP}_{5+5}$:
\begin{equation*}
 {\rm d}_{\mathfrak{psu}(2,2|4)} = {\rm d}_{\mathfrak{sP}_{5+5}} + {\rm d}',
 \end{equation*}
where $d'$ is a correction. Then $W\in H^3(\mathfrak{sP}_{5+5})$ gets
cancelled by an element of $H^2(\mathfrak{sP}_{5+5})$:
\begin{equation*}
 W = {\rm d}' {\rm STr}(\gamma_L\gamma_R).
 \end{equation*}

\subsection{Universal vertex operator}

\begin{align*}
&\alpha\langle \xi\rangle =
 0\qquad\text{for}\ \xi\in\mathfrak{psu}(2,2|4),
\\
& \alpha\langle Q_L\rangle =
 {\rm STr}\bigl(\bigl({\rm d}g g^{-1}\bigr)_{\bar{1}}\lambda_{\bar{3}}\bigr),
 \qquad
 \alpha\langle Q_R\rangle =
 - {\rm STr}\bigl(\bigl({\rm d} g g^{-1}\bigr)_{\bar{3}}\lambda_{\bar{1}}\bigr).
\end{align*}
Therefore,
\begin{align*}& V\langle Q_L\wedge Q_L\rangle=
 V\langle Q_R\wedge Q_R\rangle = 0
 ,\qquad V\langle Q_L\wedge Q_R\rangle=
 {\rm STr}(\lambda_{\bar{3}}\lambda_{\bar{1}}).
 \end{align*}

The currents of the global symmetries are only $Q$-invariant up to a total derivative:
\[
 Qj_a= {\rm d}\Lambda_a,
\qquad
\Lambda_a = {\rm d}\bigl({\rm STr}\bigl(t_ag^{-1}(\lambda_3-\lambda_1)g\bigr)\bigr).
\]
But both BRST current and $\alpha\langle Q\rangle$ are strictly invariant under the global symmetries.
Therefore, equation~(\ref{CocycleExt}) implies
\begin{equation*}
 \Lambda_a = \iota_{t_a}\alpha\langle Q\rangle.
\end{equation*}
This can be directly verified:
\begin{equation*}
 {\rm STr}\bigl(t_ag^{-1}(\lambda_3-\lambda_1)g\bigr) =
 \iota_{t_a}{\rm STr}\bigl(\lambda_3 \bigl(dgg^{-1}\bigr)_1 - \lambda_1 \bigl(dgg^{-1}\bigr)_3\bigr).
\end{equation*}
This can be also interpreted in the following way.
Since ${\cal L}_{\xi}\alpha\langle Q\rangle=0$, we have
\begin{equation*}
{\rm d}\iota_{\xi}\alpha\langle Q\rangle = - \iota_{\xi} QL.
 \end{equation*}
Here $QL$ is the pullback of an exact two-form from the target space,
and $\iota_{\xi} QL$ is (by a slight abuse of notations) the pullback of its contraction with $\xi$.
Therefore,
\begin{equation*}
 \iota_{\xi}QL = Qj\langle\xi\rangle.
\end{equation*}

\subsection{``Minimalistic'' B-field}

According our general scheme,
${\rm d} (\chi_L \alpha\langle Q_L\rangle + \chi_R \alpha\langle Q_R\rangle )$
is the variation of the Lagrangian under $\chi_L Q_L + \chi_R Q_R$.
We would like to stress that the Lagrangian can not be seen as a geometrical
object on the target space.
The Lagrangian is not a pullback of a differential form on the
worldsheet, as it contains terms like ${\rm d}X\wedge *{\rm d}X$ and $p\bar{\partial}X$.
But, one can ask: is it \emph{possible} to find
such a 2-form $\cal B$ on the target space that its Lie derivative
along $\chi_L Q_L + \chi_R Q_R$ be equal to that total derivative?
The answer is negative, in fact
${\rm d}(\chi_L \alpha\langle Q_L\rangle + \chi_L \alpha\langle Q_L\rangle)$
is a nontrivial element of
$H^2\bigl(\mathbb{R}^{0|2}_{\rm BRST},\Omega^2(X)\bigr)$.
But, it is possible to find such $\cal B$ if we do not require it to be
smooth. In fact,
\begin{gather*}
{\rm d}(\chi_L \alpha\langle Q_L\rangle + \chi_R \alpha\langle Q_R\rangle)=
 {\cal L}_{\chi_L Q_L + \chi_R Q_R} {\cal B},
\qquad
 {\cal B} = {\rm STr}(J_3(1-2\mathbb{P}_{31})J_1)
 \end{gather*}

Here $\mathbb{P}_{31}$ is some projector, which is a rational function of $\lambda_L$ and $\lambda_R$.
See \cite{Mikhailov:2017mdo} for details.
We will here list some properties of $\cal B$ and ${\cal H} = {\rm d}{\cal B}$:
\begin{align*}
 & {\cal H} = {\rm d}{\cal B},
\qquad \iota_Q{\cal H} = 0,
\qquad {\rm e}^{\iota_Q}{\cal B} = {\cal B} - \alpha + V,
\qquad {\cal H} = ({\rm d} + {\cal L}_Q)({\cal B} - \alpha + V),
 \end{align*}
where $Q = \chi_LQ_L + \chi_RQ_R$. This is a particular case of equation~(\ref{Kalkman}), when the Lie superalgebra is~$\mathbb{R}^{0|2}$.
The $C$-ghosts are $\chi_L$ and $\chi_R$, and ${\rm d}_{\rm Lie}^{(0)}=0$ because the Lie superalgebra
$\mathbb{R}^{0|2}$ is abelian.

Moreover,
\begin{equation*}
 (\mathbb{I} + \mathbb{J})({\cal B} - \alpha + V) = 0.
 \end{equation*}
Therefore, $\cal H$ is a coboundary in the BRST model of equivariant cohomology:
\begin{equation*}
 {\cal H} = ({\rm d} + {\rm d}_{{\rm Lie}\mathbb{R}^{0|2}} + \mathbb{I} + \mathbb{J})({\cal B} - \alpha + V).
 \end{equation*}
This can be extended to the full $\mathfrak{g} = \mathbb{R}^{0|2}\oplus \mathfrak{psu}(2,2|4)$, giving an analogue of equation~(\ref{DtotWZW}):
\begin{equation*}
 {\cal H} + \mathbb{J}_{\mathfrak{psu}}{\cal B} + \Psi = ({\rm d} + {\rm d}_{\rm Lie} + \mathbb{I} + \mathbb{J})({\cal B} - \alpha_{\mathbb{R}^{0|2}} + V),
 \end{equation*}
where the nonzero component of $\Psi$ is
\begin{align*}
& \Psi\langle \chi_L,\chi_R\rangle\langle \eta\rangle =
 {\rm STr}\bigl(\eta g^{-1}(\chi_L\lambda_L - \chi_R\lambda_R)g\bigr),
 \qquad \eta\in\mathfrak{psu}(2,2|4). 
 \end{align*}
Here $\mathbb{J}_{\mathfrak{psu}}$ is the contraction with $\eta\in \mathfrak{psu}(2,2|4)$.

The 2-form $\cal B$ is the key ingredient in the BV formulation of the AdS sigma-model developed in
\cite{Berkovits:2008ga,Berkovits:2019ulm,Mikhailov:2017mdo}. The one-form $\mathbb{J}_{\mathfrak{psu}}{\cal B}$
satisfies the analogue of equation~(\ref{IotaHisExact}) for $\lambda$ of the WZW model:
\begin{equation*}
{\rm d}\mathbb{J}_{\mathfrak{psu}}{\cal B} + \mathbb{J}_{\rm psu}{\cal H} = 0.
 \end{equation*}
In BV formalism of \cite{Mikhailov:2017mdo}, $\iota_v{\cal H} = \{S_{\rm BV},\underline{v}\}$
for any vector field $v$ on the target space, where ${\underline{v} = v^{\mu}x^{\star}_{\mu}}$
is the corresponding BV Hamiltonian.
Therefore, $\mathbb{J}_{\mathfrak{psu}}{\cal B}$ can be interpreted as a ``topological current''.
We hope to further investigate the properties of these currents in a future publication.

\section[Flat space limit of AdS\_5 times S\^{}5]{Flat space limit of $\boldsymbol{{\rm AdS}_5\times S^5}$}\label{FlatSpaceLimitOfAdS}

Consider pure spinor superstring sigma-model with the target space ${\rm AdS}_5\times S^5$ of the radius $R$.
If the motion of the string is limited to a neighborhood of a point, then in the limit $R\rightarrow \infty$
the space-time is effectively flat. The details of how the limit is taken were worked out, e.g.,
in~\cite{Mikhailov:2012id}. Here we will address the following question: why is it that the Lagrangian
in flat space is only invariant up to a total derivative, while in AdS it is exactly invariant?
This can be understood in the following way. First of all, $\mathfrak{psu}(2,2|4)$ is actually not a deformation
of $\mathfrak{sP}$, but rather the deformation of a subalgebra $\mathfrak{sP}_{5+5}\subset\mathfrak{sP}$
(the one which preserves the constant RR 5-form $F_5$). Those Lorentz transformations which do not preserve
$F_5$ are broken in~AdS, they do not correspond to any elements of $\mathfrak{psu}(2,2|4)$.
All other elements survive deformed, so that $\mathfrak{sP}_{5+5}\subset\mathfrak{sP}$ gets deformed
to $\mathfrak{psu}(2,2|4)$. Unlike $\mathfrak{sP}$, $\mathfrak{sP}_{5+5}$ has a nontrivial second
cohomology group $H^2(\mathfrak{sP}_{5+5}, \mathbb{R})$, see equation~(\ref{H2ofSp55}).
At the same time, ${\rm d}_{\rm Lie}$ gets deformed.
It turns out that the obstacle $W$ defined in
equation~(\ref{WInFlatSpace}) is cancelled by the deformation of ${\rm d}_{\rm Lie}$ acting on that element of
$H^2(\mathfrak{sP}_{5+5}, \mathbb{R})$.

We will now describe in detail how this happens.

\subsection{Zooming on a point in AdS}

The scale of the fundamental fields $\theta$ and $x$ are
\begin{equation*}
 [x]=R^{-1},\qquad [\theta]=[\lambda]=R^{-1/2},\qquad [p]=[w]=R^{-3/2}.
 \end{equation*}
The action up to order $R^{-2}$ is
\begin{align*} S =
 &{}\int {\rm d}^2\tau\, R^{-1}L_1+R^{-2}(L_2+L_3+L_4)
 \\ =
 &{}\int {\rm d}^2\tau \biggl[R^{-1}\partial_+\theta_R\partial_-\theta_L + R^{-2}\biggl(\frac{1}{2}\partial_+x\partial_-x + L_3+L_4\biggr)\biggr]
 \end{align*}
with
\begin{align*}& L_3 =
 -\frac{1}{2}([\theta_R,\partial_+\theta_R],\partial_-x) -\frac{1}{2}(\partial_+x,[\theta_L,\partial_-\theta_L]),
\\
 & L_4 =
 - \frac{1}{24}([\theta_L,\partial_+\theta_L],[\theta_L,\partial_-\theta_L])-\!\frac{1}{24}([\theta_R,\partial_+\theta_R],[\theta_R,\partial_-\theta_R])
 -\!\frac{1}{12}([\theta_R,\partial_+\theta_R],[\theta_L,\partial_-\theta_L])
 \\
 &\hphantom{L_4 =}{} - \frac{1}{6}([\theta_R,\partial_+\theta_L],[\theta_R,\partial_-\theta_L]) -\!\frac{1}{6}([\theta_L,\partial_+\theta_R],[\theta_L,\partial_-\theta_R]) -\!\frac{1}{3}([\theta_L,\partial_+\theta_R],[\theta_R,\partial_-\theta_L]).
 \end{align*}
We introduce the first order formalism in the first term by introducing $\tilde{p}$ fields, such that $p_1 = p_{1+}\in\mathfrak{g}_1$ and $p_3 = p_{3-}\in\mathfrak{g}_3$. The relevant term in the action is substituted by
\begin{equation*}
 R^{-1}\partial_+\theta_R\partial_-\theta_L \longmapsto R^{-2}(\widetilde{p}_{1+}\partial_-\theta_L) + R^{-2}(\widetilde{p}_{3-}\partial_+\theta_R) - R^{-3}\widetilde{p}_{1+}\widetilde{p}_{3-}.
 \end{equation*}
Integrating out in $\tilde{p}$, we ga back to the original Lagrangian, in which case:
\begin{align*}
 & \widetilde{p}_{1+}=R\partial_+\theta_R, \hspace{1cm} \widetilde{p}_{3-}=R\partial_-\theta_L.
 \end{align*}
Now, we redefine the $\tilde{p}$ fields:
\begin{gather*} p_{1+} =
 \widetilde{p}_{1+} + \frac{1}{2}[\theta_L,\partial_+x] +\frac{1}{24}[\theta_L,[\theta_L,\partial_+\theta_L]]+\frac{1}{24}[\theta_L,[\theta_R,\partial_+\theta_R]]
 \\ \hphantom{p_{1+} =}{}
 +\frac{1}{6}[\theta_R,[\theta_R,\partial_+\theta_L]]+\frac{1}{6}[\theta_R,[\theta_L,\partial_+\theta_R]],
 \\ p_{3-} =
 \widetilde{p}_{3-} + \frac{1}{2}[\theta_R,\partial_-x] + \frac{1}{24}[\theta_R,[\theta_R,\partial_-\theta_R]]+\frac{1}{24}[\theta_R,[\theta_L,\partial_-\theta_L]]
 \\ \hphantom{p_{3-} =}{}
 +\frac{1}{6}[\theta_L,[\theta_L,\partial_-\theta_R]]+\frac{1}{6}[\theta_L,[\theta_R,\partial_-\theta_L]].
\end{gather*}
This makes the leading order on the action to be
\begin{equation*}
 S = \int {\rm d}^2\tau\, R^{-2}\biggl[ (p_{1+}\partial_-\theta_L)+(p_{3-}\partial_+\theta_R)+\frac{1}{2}\partial_+x\partial_-x \biggr].
 \end{equation*}

\subsection[How W arises in the flat space limit]{How $\boldsymbol{W}$ arises in the flat space limit}

The obstruction $W$, which is an element of the $H^3\bigl(\mathfrak{sP}\oplus {\bf R}^{0|2}_{\rm BRST},\mathbb{C}\bigr)$ cohomology begins to appear when we observe that the flat limit is obtained from AdS limit only after integrating by parts, and then introducing the momentum $p$:
\begin{equation*}
 L_{\rm AdS}^{(p)} = L_{\rm flat}^{(p)} + {\rm d}\beta + \mathcal{O}\bigl(R^{-3}\bigr)
 \end{equation*}
such that ${\rm d}_{\rm Lie}\beta = -\alpha+{\rm d}A$, so that ${\rm d}_{\rm Lie}L_{\rm AdS}=0$.
The explicit formula is
\begin{equation*}
 \beta=-\frac{1}{4}\theta_L{\rm d}\theta_R+\frac{1}{24}[\theta_R,\theta_L][{\rm d}\theta_R,\theta_L].
 \end{equation*}
This $\beta$ satisfies
\begin{equation*}
 \alpha = -{\rm d}_{\mathfrak{psu}}\beta + {\rm d}A
 \end{equation*}
with
\begin{equation*}
 A = \frac{1}{4}[\theta_R,\gamma_R]x+\frac{1}{24}[\theta_R,\theta_L][\gamma_R,\theta_L]-\frac{1}{4}\gamma_L\theta_R.
 \end{equation*}
Therefore, ${\rm d}V={\rm d}_\mathfrak{g}\alpha={\rm d}({\rm d}_\mathfrak{g}A)$ and then
\begin{equation*}
 V = {\rm d}_\mathfrak{psu}A + C,
 \end{equation*}
where $C$ is a constant in the target space:
\begin{equation}
 C = {\rm STr}(\gamma_L\gamma_R) = \gamma^\alpha_L\gamma^{\hat\alpha}_R{\rm STr}\bigl(T^L_\alpha T^R_{\hat\alpha}\bigr).
 \label{H2ofSp55}\end{equation}
Finally,
\begin{equation*}
{\rm d}_\mathfrak{psu}C = ([\gamma_L,\gamma_L]-[\gamma_R,\gamma_R])c = W.
 \end{equation*}
See \cite{Suszek:2018bvx} for a previous discussion of this, and \cite{Mikhailov:2012id} for a somewhat similar story
with the deformation of the BRST operator.

\subsection*{Acknowledgments}

We want to think the referees for helping us to improve the paper, in particular
for pointing out errors in the original version.
This work was supported in part by ICTP-SAIFR FAPESP grant 2019/21281-4.
The work of VB was supported by the Grant Agency of the Czech Republic under the
grant EXPRO 20-25775X.
The work of AM was supported in part by CNPq grant ``Produtividade em Pesquisa'' 307191/2022-2.
The work of EV was supported by FAPESP grant 2022/00940-2.

\pdfbookmark[1]{References}{ref}
\LastPageEnding


\begin{thebibliography}{99}
\footnotesize\itemsep=0pt

\bibitem{Alekseev:2004np}
Alekseev A., Strobl T., Current algebras and differential geometry,
 \href{https://doi.org/10.1088/1126-6708/2005/03/035}{\textit{J.~High Energy Phys.}} \textbf{2005} (2005), no.~3, 035, 14~pages,
 \href{https://arxiv.org/abs/hep-th/0410183}{arXiv:hep-th/0410183}.

\bibitem{Alexandrov:1995kv}
Alexandrov M., Schwarz A., Zaboronsky O., Kontsevich M., The geometry of the
 master equation and topological quantum field theory,
 \href{https://doi.org/10.1142/S0217751X97001031}{\textit{Internat.~J.~Modern Phys.~A}} \textbf{12} (1997), 1405--1429,
 \href{https://arxiv.org/abs/hep-th/9502010}{arXiv:hep-th/9502010}.

\bibitem{Berkovits:2000fe}
Berkovits N., Super-{P}oincar\'e covariant quantization of the superstring,
 \href{https://doi.org/10.1088/1126-6708/2000/04/018}{\textit{J.~High Energy Phys.}} \textbf{2000} (2000), no.~4, 018, 17~pages,
 \href{https://arxiv.org/abs/hep-th/0001035}{arXiv:hep-th/0001035}.

\bibitem{Berkovits:2008ga}
Berkovits N., Simplifying and extending the {${\rm AdS}_5\times S^5$} pure
 spinor formalism, \href{https://doi.org/10.1088/1126-6708/2009/09/051}{\textit{J.~High Energy Phys.}} \textbf{2009} (2009), no.~9,
 051, 34~pages, \href{https://arxiv.org/abs/0812.5074}{arXiv:0812.5074}.

\bibitem{Berkovits:2019ulm}
Berkovits N., Sketching a proof of the {M}aldacena conjecture at small radius,
 \href{https://doi.org/10.1007/jhep06(2019)111}{\textit{J.~High Energy Phys.}} \textbf{2019} (2019), no.~6, 111, 14~pages,
 \href{https://arxiv.org/abs/1903.08264}{arXiv:1903.08264}.

\bibitem{Berkovits:2001ue}
Berkovits N., Howe P., Ten-dimensional supergravity constraints from the pure
 spinor formalism for the superstring, \href{https://doi.org/10.1016/S0550-3213(02)00352-8}{\textit{Nuclear Phys.~B}} \textbf{635}
 (2002), 75--105, \href{https://arxiv.org/abs/hep-th/0112160}{arXiv:hep-th/0112160}.

\bibitem{Blohmann}
Blohmann C., The homotopy momentum map of general relativity, \href{https://doi.org/10.1093/imrn/rnac087}{\textit{Int.
 Math. Res. Not.}} \textbf{2023} (2023), 8212--8250, \href{https://arxiv.org/abs/2103.07670}{arXiv:2103.07670}.

\bibitem{Cordes:1994fc}
Cordes S., Moore G., Ramgoolam S., Lectures on~{$2$}{D} {Y}ang--{M}ills theory,
 equivariant cohomology and topological field theories, \href{https://doi.org/10.1016/0920-5632(95)00434-B}{\textit{Nuclear
 Phys.~B Proc. Suppl.}} \textbf{41} (1995), 184--244, \href{https://arxiv.org/abs/hep-th/9411210}{arXiv:hep-th/9411210}.

\bibitem{Deligne:1999qp}
Deligne P., Etingof P., Freed D.S., Jeffrey L.C., Kazhdan D., Morgan J.W.,
 Morrison D.R., Witten E. (Editors), Quantum fields and strings: {A} course
 for mathematicians, Vol.~1,~2, American Mathematical Society, 1999.

\bibitem{FeiginFuchs}
Feigin B.L., Fuchs D.B., Cohomology of {L}ie groups and {L}ie algebras, in
 Current Problems in Mathematics. {F}undamental Directions, {V}ol.~21, Itogi
 Nauki i Tekhniki, Akad. Nauk SSSR, Vsesoyuz. Inst. Nauchn. i Tekhn. Inform.,
 Moscow, 1988, 121--209.

\bibitem{GelfandManin}
Gelfand S.I., Manin Yu.I., Methods of homological algebra, 2nd ed., \textit{Springer Monogr. Math.}, \href{https://doi.org/10.1007/978-3-662-12492-5}{Springer}, Berlin, 2003.

\bibitem{Knapp}
Knapp A.W., Lie groups, {L}ie algebras, and cohomology, \textit{Math. Notes},
 Vol.~34, \href{https://doi.org/10.1515/9780691223803}{Princeton University Press}, Princeton, NJ, 1988.

\bibitem{Mikhailov:2012id}
Mikhailov A., Cornering the unphysical vertex, \href{https://doi.org/10.1007/JHEP11(2012)082}{\textit{J.~High Energy Phys.}}
 \textbf{2012} (2012), no.~11, 082, 31~pages, \href{https://arxiv.org/abs/1203.0677}{arXiv:1203.0677}.

\bibitem{Mikhailov:2017mdo}
Mikhailov A., A minimalistic pure spinor sigma-model in {A}d{S},
 \href{https://doi.org/10.1007/jhep07(2018)155}{\textit{J.~High Energy Phys.}} \textbf{2018} (2018), no.~7, 155, 29~pages,
 \href{https://arxiv.org/abs/1706.08158}{arXiv:1706.08158}.

\bibitem{Suszek:2018bvx}
Suszek R.R., Equivariant {C}artan--{E}ilenberg supergerbes for the
 {G}reen--{S}chwarz superbranes~{III}. {T}he wrapping anomaly and the
 super-{${\rm AdS}_5\times\mathbb{S}^5$} background, \href{https://arxiv.org/abs/1808.04470}{arXiv:1808.04470}.

\bibitem{Suszek:2019cum}
Suszek R.R., Equivariant {C}artan--{E}ilenberg supergerbes~{II}. {E}quivariance
 in the super-{M}inkowskian setting, \href{https://arxiv.org/abs/1905.05235}{arXiv:1905.05235}.

\bibitem{Witten:1991mm}
Witten E., On holomorphic factorization of {WZW} and coset models,
 \href{https://doi.org/10.1007/BF02099196}{\textit{Comm. Math. Phys.}} \textbf{144} (1992), 189--212.

\bibitem{Zavaleta:2019dop}
Zavaleta D., Pure spinor string and generalized geometry, \href{https://arxiv.org/abs/1906.05784}{arXiv:1906.05784}.

\end{thebibliography}
\end{document}